\theoremstyle{plain}
\newtheorem{theorem}{Theorem}
\newtheorem{lemma}{Lemma}
\theoremstyle{definition}
\newtheorem{definition}{Definition}
\newtheorem*{ex}{Example}
\theoremstyle{remark}
\date{}
\begin{document}

\title{Comparison of Deterministic and Nondeterministic Decision Trees
for Decision Tables with Many-valued Decisions from Closed Classes}

\author{Azimkhon Ostonov and Mikhail Moshkov \\
Computer, Electrical and Mathematical Sciences \& Engineering Division \\ and Computational Bioscience Research Center\\
King Abdullah University of Science and Technology (KAUST) \\
Thuwal 23955-6900, Saudi Arabia\\ \{azimkhon.ostonov,mikhail.moshkov\}@kaust.edu.sa
}

\maketitle

\begin{abstract}
In this paper, we consider classes of decision tables with many-valued decisions closed relative to removal of attributes (columns) and changing sets of decisions assigned to rows. For tables from an arbitrary closed class, we study a function $\mathcal{H}^{\infty}_{\psi ,A}(n)$ that characterizes the dependence in the worst case of the minimum complexity of deterministic decision trees on the minimum complexity of nondeterministic decision trees. Note that nondeterministic decision trees for a decision table can be interpreted as a way to represent an arbitrary system of true decision rules for this table that cover all rows.
We indicate the condition for the function $\mathcal{H}^{\infty}_{\psi ,A}(n)$ to be defined everywhere. If this function is everywhere defined, then it is either bounded from above by a constant or is greater than or equal to $n$ for infinitely many $n$. In particular, for any nondecreasing function  $\varphi$ such that $\varphi (n)\geq n$ and $\varphi (0)=0$, the function $\mathcal{H}^{\infty}_{\psi ,A}(n)$ can grow between $\varphi (n)$ and $\varphi (n)+n$. We indicate also conditions for the function $\mathcal{H}^{\infty}_{\psi ,A}(n)$ to be bounded from above by a polynomial on $n$.
\end{abstract}

{\it Keywords}: Deterministic decision trees, Nondeterministic decision trees, Decision rule systems.

\section{Introduction\label{7S0}}

In this paper, we consider closed classes of decision tables with many-valued decisions.  For tables from an arbitrary closed class,
we study a function that characterizes the dependence in the worst case of the minimum complexity of deterministic  decision trees on  the minimum complexity of nondeterministic decision trees.

A decision table with many-valued decisions is a rectangular table in which
columns are labeled with attributes, rows are pairwise different and each
row is labeled with a nonempty finite set of decisions. Rows are interpreted
as tuples of values of the attributes. For a given row, it is required to
find a decision from the set of decisions attached to the row. To this end,
we can use the following queries: we can choose an attribute and ask what is
the value of this attribute in the considered row. We study two types of
algorithms based on these queries: deterministic and nondeterministic
decision trees. One can interpret nondeterministic decision trees for a
decision table as a way to represent an arbitrary system of true decision
rules for this table that cover all rows. We consider so-called bounded
complexity measures that characterize the time complexity of decision
trees, for example, the depth of decision trees.

Decision tables with many-valued decisions often appear in data analysis,
where they are known as multi-label decision tables \cite{Boutell04,Vens08,Zhou12}.
Moreover, decision tables with many-valued decisions are common in such areas as
combinatorial optimization, computational geometry, and fault diagnosis, where they
are used to represent and explore problems \cite{AlsolamiACM20,MoshkovZ11}.

Decision trees \cite%
{AbouEishaACHM19,AlsolamiACM20,BreimanFOS84,Moshkov05,Moshkov20,Quinlan93,RokachM07}
and decision rule systems \cite%
{BorosHIK97,BorosHIKMM00,ChikalovLLMNSZ13,FurnkranzGL12,MPZ08,MoshkovZ11,Pawlak91,PawlakS07}
are widely used as a means for knowledge representation, as classifiers that
predict decisions for new objects, and as algorithms for solving various
problems of fault diagnosis, combinatorial optimization, etc. Decision trees
and rules are among the most interpretable models for classifying and
representing knowledge \cite{Molnar22}. The study of relationships between these
two models is an important task of computer science. This paper is precisely devoted to the study of these relationships.

We explore classes of decision tables with many-valued decisions closed under
removal of columns (attributes) and changing the decisions (really, sets of
decisions). The most natural examples of such
classes are closed classes of decision tables generated by information
systems  \cite{Pawlak81}. An information system consists of a set of objects
(universe) and a set of attributes (functions) defined on the universe and
with values from a finite set. A problem with many-valued decisions over an
information system is specified by a finite number of attributes that divide the universe into
nonempty domains in which these attributes have fixed values. A nonempty
finite set of decisions is attached to each domain. For a given object from
the universe, it is required to find a decision from the set attached to the
domain containing this object.

A decision table with many-valued decisions
corresponds to this problem in a natural way: columns of this table are
labeled with the considered attributes, rows correspond to domains and are
labeled with sets of decisions attached to domains. The set of decision
tables corresponding to problems with many-valued decisions over an information system forms a closed class
generated by this system. Note that the family of all closed classes is
essentially wider than the family of closed classes generated by information
systems. In particular, the union of two closed classes generated by two
information systems is a closed class. However, generally, there is no an
information system that generates this class.

Let $A$ be a class of decision tables with many-valued decisions closed under removal of
columns and changing of decisions, and $\psi$ be a bounded complexity
measure. In this paper, we study the function $\mathcal{H}^{\infty}_{\psi ,A}(n)$, which  characterizes for decision tables from $A$ the growth in the worst case of the minimum complexity of deterministic decision trees with the growth of the minimum complexity of nondeterministic decision trees.

We indicate the condition for the function $\mathcal{H}^{\infty}_{\psi ,A}(n)$ to be defined
everywhere. If this condition is satisfied, then the considered function is either bounded from above by a constant, or is greater than or equal to $n$ for infinitely many $n$. In particular, for any nondecreasing function  $\varphi$ such that $\varphi (n)\geq n$ and $\varphi (0)=0$, the function $\mathcal{H}^{\infty}_{\psi ,A}(n)$ can grow between $\varphi (n)$ and $\varphi (n)+n$. We indicate also conditions for the function $\mathcal{H}^{\infty}_{\psi ,A}(n)$ to be bounded from above by a polynomial on $n$.

The results obtained allow us to significantly better understand how the complexity of deterministic and nondeterministic decision trees for decision tables with many-valued decisions relate. Note that, by comparing the complexity of deterministic and nondeterministic decision trees, we are essentially comparing the time complexity of sequential and parallel (one processor per rule) classification algorithms for decision tables with many-valued decisions.

We also established the following significant difference between conventional decision tables \cite{Ostonov23d} in which one decision is attached to each row and decision tables with many-valued decisions.
A decision table with $n$ columns filled with zeros and ones is called complete if it has $2^n$ pairwise different rows. One can show that, for a conventional complete decision table, the minimum depth of a deterministic decision tree  does not exceed the square of the minimum depth of a nondeterministic decision tree. This is a simple generalization of the results obtained for Boolean functions in \cite{BlumI87,HartmanisH87,Tardos89} (see also \cite{BuhrmanW02}). In the present paper, we construct an infinite sequence of complete decision tables with many-valued decisions for which the minimum depth of deterministic decision trees does not bounded from above by a constant but the minimum depth of nondeterministic decision trees is at most $3$ -- see Lemmas \ref{7M4} and \ref{7M6}.

Note that there is a similarity between some results obtained in this paper for closed classes of decision tables and  results from the book \cite{Moshkov20} obtained for problems over information systems. However, the results of the present paper are more general since the family of all closed classes is essentially wider than the family of closed classes generated by information systems.

This paper is an extended version of a part of the conference  paper \cite{Ostonov23c}, which does not contain proofs.

The present paper consists of six sections. In Section \ref{7S1}, main definitions and
notation are considered. In Section \ref{7S2}, we provide the main results.
In Sections \ref{7S4}-\ref{7S5}, we prove the main results. Section \ref{7S6} contains short conclusions.

\section{Main Definitions and Notation\label{7S1}}

Denote $\omega =\{0,1,2,\ldots \}$, $\mathcal{P}(\omega )$ the set of
nonempty finite subsets of the set $\omega $ and, for any $k\in \omega
\setminus \{0,1\}$, denote $E_{k}=\{0,1,\ldots ,k-1\}$. Let $P=\{f_{i}:i\in
\omega \}$ be the set of \emph{attributes} (really, names of attributes). Two
attributes $f_{i},f_{j}\in P$ are considered \emph{different} if $i\neq j$.

\subsection{Decision Tables\label{7S1.1}}

First, we define the notion of a decision table.

\begin{definition}
Let $k\in \omega \setminus \{0,1\}$. Denote by $\mathcal{M}_{k}^{\infty }$
the set of rectangular tables filled with numbers from $E_{k}$ in each of
which rows are pairwise different, each row is labeled with a set from $%
\mathcal{P}(\omega )$ (set of decisions), and columns are labeled with
pairwise different attributes from $P$. Rows are interpreted as tuples of
values of these attributes. Empty tables without rows belong also to the set
$\mathcal{M}_{k}^{\infty }$. We will use the same notation $\Lambda $ for
these tables. Tables from $\mathcal{M}_{k}^{\infty }$ will be called
\emph{decision tables with many-valued decisions} (\emph{decision tables}).
\end{definition}
\index{Decision table with many-valued decisions}

For a table $T\in \mathcal{M}_{k}^{\infty }$, we denote by $%
\Delta (T)$ the set of rows of the table $T$ and by $\Pi (T)$ we denote the
intersection of sets of decisions attached to rows of $T$. Decisions from $%
\Pi (T)$ are called \emph{common decisions} for the table $T$.
\index{Decision table with many-valued decisions!common decision}

\begin{ex}
Figure \ref{7fig1} shows a decision table from $\mathcal{M}_{2}^{\infty }$.
\end{ex}

\begin{figure}[tbp]
\begin{minipage}[c]{1.0\textwidth}
\begin{center}
\begin{tabular}{ |ccc|c| }
 \hline
 $f_{2}$ & $f_{4}$ & $f_{3}$ &\\
  \hline
 1 & 1 & 1 & $\{1\}$ \\
 0 & 1 & 1 & $\{0, 1, 2\}$ \\
 1 & 1 & 0 & $\{1, 3\}$ \\
 0 & 0 & 1 & $\{2\}$ \\
 1 & 0 & 0 & $\{3\}$ \\
 0 & 0 & 0 & $\{2, 3\}$ \\
 \hline
\end{tabular}
\end{center}
\end{minipage}
\caption{Decision table from $\mathcal{M}_{2}^{\infty}$}
\label{7fig1}
\end{figure}

Denote by $\mathcal{M}_{k}^{\infty c}$ the set of tables from $%
\mathcal{M}_{k}^{\infty }$ in each of which there exists a common decision.
Let $\Lambda \in $ $\mathcal{M}_{k}^{\infty c}$.

Let $T$ be a nonempty table from $\mathcal{M}_{k}^{\infty }$. Denote by $\operatorname{At}(T)
$ the set of attributes attached to columns of the table $T$. We denote by $\Omega _{k}(T)$ the set of finite words over the alphabet
$\{(f_i,\delta):f_i \in \operatorname{At}(T), \delta \in E_k\}$ including the empty word $\lambda $.
For any $\alpha \in \Omega _{k}(T)$, we now define a subtable $T\alpha$ of the table $T$. If $\alpha = \lambda$, then $T\alpha =T$. If $\alpha \neq \lambda$ and $\alpha =(f_{i_{1}},\delta
_{1})\cdots (f_{i_{m}},\delta _{m})$, then $T\alpha $ is the table obtained from $T$
by removal of all rows that do not satisfy the following condition: in
columns labeled with attributes $f_{i_{1}},\ldots ,f_{i_{m}}$, the row has
numbers $\delta _{1},\ldots ,\delta _{m}$, respectively.

We now define two operations on decision tables: removal of columns and
changing of decisions. Let $T\in \mathcal{M}_{k}^{\infty }$.

\begin{definition}
\emph{Removal of columns.} Let $D\subseteq \operatorname{At}(T)$. We remove from $T$ all
columns labeled with the attributes from the set $D$. In each group of rows
equal on the remaining columns, we keep the first one. Denote the obtained
table by $I(D,T)$. In particular, $I(\emptyset ,T)=T$ and $I(\operatorname{At}(T),T)=\Lambda
$. It is obvious that $I(D,T)\in $ $\mathcal{M}_{k}^{\infty }$.
\end{definition}

\begin{definition}
	\emph{Changing of decisions.} Let $\nu :E_{k}^{\left\vert \operatorname{At}(T)\right\vert
}\rightarrow \mathcal{P}(\omega )$ (by definition, $E_{k}^{0}=\emptyset $).
For each row $\bar{\delta}$ of the table $T$, we replace the set of
decisions attached to this row with $\nu (\bar{\delta})$. We denote the
obtained table by $J(\nu ,T)$. It is obvious that $J(\nu ,T)\in $ $\mathcal{M%
}_{k}^{\infty }$.
\end{definition}

\begin{definition}
Denote $\left[ T\right] =\{J(\nu ,I(D,T)):D\subseteq \operatorname{At}(T),\nu
:E_{k}^{\left\vert \operatorname{At}(T)\setminus D\right\vert }\rightarrow \mathcal{P}%
(\omega )\}$. The set $\left[ T\right] $ is the \emph{closure of the table} $%
T$ under the operations of removal of columns and changing of decisions.
\end{definition}

\begin{ex}
Figure \ref{7fig2} shows the table $J(\nu ,I(D,T_0))$, where $T_0$ is the table
shown in Figure   \ref{7fig1}, $D=\{f_{4}\}$ and $\nu (x_{1},x_{2})=\{\min(x_{1},x_{2}), \max(x_{1},x_{2})\}$.
\end{ex}

\begin{figure}[tbp]
\begin{minipage}[c]{1.0\textwidth}
\begin{center}
\begin{tabular}{ |cc|c| }
 \hline
 $f_{2}$ & $f_{3}$ &\\
  \hline
 1 & 1 & $\{1\}$ \\
 0 & 1 & $\{0, 1\}$ \\
 1 & 0 & $\{0, 1\}$ \\
 0 & 0 & $\{0\}$ \\
 \hline
\end{tabular}
\end{center}
\end{minipage}
\caption{Decision table obtained from the decision table shown in Figure
\protect\ref{7fig1} by removal of a column and changing of decisions}
\label{7fig2}
\end{figure}

\begin{definition}
Let $A\subseteq \mathcal{M}_{k}^{\infty }$ and $A\neq \emptyset $. Denote $%
\left[ A\right] =\bigcup_{T\in A}\left[ T\right] $. The set $\left[ A\right]
$ is the \emph{closure} of the set $A$ under the considered two operations.
The class (the set) of decision tables $A$ will be called a \emph{closed
class} if $\left[ A\right] =A$.
\end{definition}
\index{Decision table with many-valued decisions!closed class}

A closed class of decision tables will be called \emph{nontrivial} if it contains nonempty decision tables.
\index{Decision table with many-valued decisions!closed class!nontrivial}

Let $A_{1}$ and $A_{2}$ be closed classes of decision tables from $\mathcal{M%
}_{k}^{\infty }$. Then $A_{1}\cup A_{2}$ is a closed class of decision
tables from $\mathcal{M}_{k}^{\infty }$.

\subsection{Deterministic and Nondeterministic Decision Trees\label{7S1.2}}

A \emph{finite tree with root} is a finite directed tree in which exactly
one node called the \emph{root} has no entering edges. The nodes without
leaving edges are called \emph{terminal} nodes.

\begin{definition}
A $k$-\emph{decision tree} is a finite tree with root, which has at least
two nodes and in which

\begin{itemize}
\item The root and edges leaving the root are not labeled.

\item Each terminal node is labeled with a decision from the set $\omega $.

\item Each node, which is neither the root nor a terminal node, is labeled
with an attribute from the set $P$. Each edge leaving such node is labeled
with a number from the set $E_{k}$.
\end{itemize}
\end{definition}

\begin{ex}
Figures \ref{7fig3} and \ref{7fig4} show $2$-decision trees.
\end{ex}

\begin{figure}[tbp]
\centering
\includegraphics[width=0.305\columnwidth]{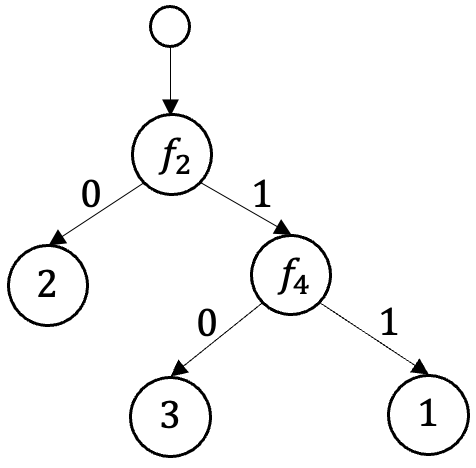}
\caption{A deterministic decision tree for the decision table shown in
Figure   \protect\ref{7fig1}}
\label{7fig3}
\end{figure}

\begin{figure}[tbp]
\centering
\includegraphics[width=0.305\columnwidth]{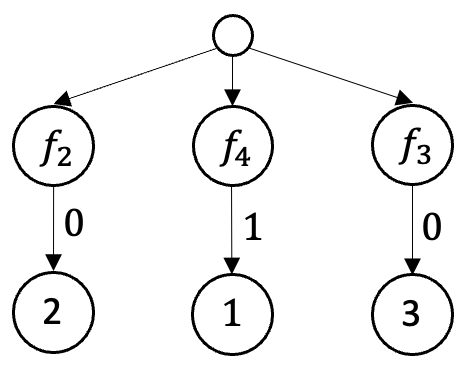}
\caption{A nondeterministic decision tree for the decision table
shown in Figure   \protect\ref{7fig1}}
\label{7fig4}
\end{figure}

We denote by $\mathcal{T}_{k}$ the set of all $k$-decision trees. Let $%
\Gamma \in \mathcal{T}_{k}$. We denote by $\operatorname{At}(\Gamma )$ the set of attributes
attached to nodes of $\Gamma $ that are neither the root nor terminal nodes.
A \emph{complete path} of $\Gamma $ is a sequence $\tau =v_{1},d_{1},\ldots
,v_{m},d_{m},v_{m+1}$ of nodes and edges of $\Gamma $ in which $v_{1}$ is
the root of $\Gamma $, $v_{m+1}$ is a terminal node of $\Gamma $ and, for $%
j=1,\ldots ,m$, the edge $d_{j}$ leaves the node $v_{j}$ and enters the node
$v_{j+1}$. Let $T\in $ $\mathcal{M}_{k}^{\infty }$. If $\operatorname{At}(\Gamma )\subseteq
\operatorname{At}(T)$, then we correspond to the table $T$ and the complete path $\tau $ a
word $\pi (\tau )\in \Omega _{k}(T)$.  If $m=1$, then $\pi (\tau )=\lambda $%
. If $m>1$ and, for $j=2,\ldots ,m$, the node $v_{j}$ is labeled with the
attribute $f_{i_{j}}$ and the edge $d_{j}$ is labeled with the number $%
\delta _{j}$, then $\pi (\tau )=(f_{i_{2}},\delta _{2})\cdots
(f_{i_{m}},\delta _{m})$. Denote $T(\tau )=T\pi (\tau )$.

\begin{definition}
Let $T\in \mathcal{M}_{k}^{\infty }\setminus \{\Lambda \}$. A
\emph{deterministic decision tree for the table} $T$ is a $k$-decision tree $%
\Gamma $ satisfying the following conditions:
\index{Decision table with many-valued decisions!decision tree!deterministic}

\begin{itemize}
\item Only one edge leaves the root of $\Gamma $.

\item For any node, which is neither the root nor a terminal node, edges
leaving this node are labeled with pairwise different numbers.

\item $\operatorname{At}(\Gamma )\subseteq \operatorname{At}(T)$.

\item For any row of $T$, there exists a complete path $\tau $ of $\Gamma $
such that the considered row belongs to the table $T(\tau )$.

\item For any complete path $\tau $ of $\Gamma $, either $T(\tau )=\Lambda $
or the decision attached to the terminal node of $\tau $ is a common
decision for the table $T(\tau )$ .
\end{itemize}
\end{definition}

\begin{ex}
The $2$-decision tree shown in Figure   \ref{7fig3} is a deterministic decision
tree for the decision table shown in Figure   \ref{7fig1}.
\end{ex}

\begin{definition}
Let $T\in $ $\mathcal{M}_{k}^{\infty }\setminus \{\Lambda \}$. A
\emph{nondeterministic decision tree for the table} $T$ is a $k$-decision tree $%
\Gamma $ satisfying the following conditions:
\index{Decision table with many-valued decisions!decision tree!nondeterministic}

\begin{itemize}
\item $\operatorname{At}(\Gamma )\subseteq \operatorname{At}(T)$.

\item For any row of $T$, there exists a complete path $\tau $ of $\Gamma $
such that the considered row belongs to the table $T(\tau )$.

\item For any complete path $\tau $ of $\Gamma $, either $T(\tau )=\Lambda $
or the decision attached to the terminal node of $\tau $ is a common
decision for the table $T(\tau )$.
\end{itemize}
\end{definition}

\begin{ex}
The $2$-decision tree shown in Figure   \ref{7fig4} is a
nondeterministic decision tree for the decision table shown in Figure   \ref%
{7fig1}.
\end{ex}

\subsection{Complexity Measures\label{7S1.3}}

Denote by $P^{*}$ the set of all finite words over the alphabet $P=\{f_{i}:i\in
\omega \}$, which contains the empty word $\lambda $.

\begin{definition}
A \emph{partially bounded complexity measure} is an arbitrary function $\psi :P^{*}\rightarrow
\omega $ that has the following properties: for any words $\alpha
_{1},\alpha _{2}\in P^{*}$,
\index{Complexity measure!partially bounded}

\begin{itemize}
\item $\psi (\alpha _{1})=0$ if and only if $\alpha _{1}=\lambda $ -- \emph{%
positivity} property.

\item $\psi (\alpha _{1})=\psi (\alpha _{1}^{\prime })$ for any word $\alpha
_{1}^{\prime }$ obtained from $\alpha _{1}$ by permutation of letters --
\emph{commutativity} property.

\item $\psi (\alpha _{1})\leq \psi (\alpha _{1}\alpha _{2})$ -- \emph{%
nondecreasing} property.

\item $\psi (\alpha _{1}\alpha _{2})\leq \psi (\alpha _{1})+\psi (\alpha
_{2})$ -- \emph{boundedness from above} property.
\end{itemize}
\end{definition}

The following functions are partially bounded complexity measures:

\begin{itemize}
\item Function $h$ for which, for any word $\alpha \in P^{*}$, $h(\alpha
)=\left\vert \alpha \right\vert $, where $\left\vert \alpha \right\vert $ is
the length of the word $\alpha $. This function is called the \emph{depth}.

\item An arbitrary function $\varphi :P^{*}\rightarrow \omega $ such that $%
\varphi (\lambda )=0$, for any $f_{i}\in P$, $\varphi (f_{i})>0$ and, for
any nonempty word $f_{i_{1}}\cdots f_{i_{m}}\in P^{*}$,
\begin{equation}
\varphi (f_{i_{1}}\cdots f_{i_{m}})=\sum_{j=1}^{m}\varphi (f_{i_{j}}).
\label{7E1}
\end{equation}

This function is called the \emph{weighted depth}.

\item An arbitrary function $\rho :P^{*}\rightarrow \omega $ such that $\rho
(\lambda )=0$, for any $f_{i}\in P$, $\rho (f_{i})>0$, and, for any nonempty
word $f_{i_{1}}\cdots f_{i_{m}}\in P^{*}$, $\rho (f_{i_{1}}\cdots
f_{i_{m}})=\max \{\rho (f_{i_{j}}):j=1,\ldots ,m\}$.
\end{itemize}

\begin{definition}
A \emph{bounded} complexity measure is a partially bounded complexity measure $\psi $, which
has the \emph{boundedness from below} property: for any word $\alpha \in P^{*}$,
$\psi (\alpha )\geq \left\vert \alpha \right\vert $.
\end{definition}
\index{Complexity measure!bounded}

Any partially bounded complexity measure satisfying the equality (\ref{7E1}), in particular the
function $h$, is a bounded complexity measure. One can show that if
functions $\psi _{1}$ and $\psi _{2}$ are partially bounded complexity measures, then the
functions $\psi _{3}$ and $\psi _{4}$ are partially bounded complexity measures, where for any
$\alpha \in P^{*}$, $\psi _{3}(\alpha )=\psi _{1}(\alpha )+\psi _{2}(\alpha )$
and $\psi _{4}(\alpha )=\max (\psi _{1}(\alpha ),\psi _{2}(\alpha ))$. If
the function $\psi _{1}$ is a bounded complexity measure, then the functions
$\psi _{3}$ and $\psi _{4}$ are bounded complexity measures.

\begin{definition}
Let $\psi $ be a partially bounded complexity measure. We extend it to the set of all finite
subsets of the set $P$. Let $D$ be a finite subset of the set $P$. If $%
D=\emptyset $, then $\psi (D)=0$. Let $D=\{f_{i_{1}},\ldots ,f_{i_{m}}\}$
and $m\geq 1$. Then $\psi (D)=\psi (f_{i_{1}}\cdots f_{i_{m}})$.
\end{definition}

\begin{definition}
Let $\psi $ be a partially bounded complexity measure. We extend it to the set of  finite
words $\Omega $ over the alphabet $\{(f_{i},\delta ):f_{i}\in P,\delta \in
\omega \}$ including the empty word $\lambda $. Let $\alpha \in \Omega $. If
$\alpha =\lambda $, then $\psi (\alpha )=0$. Let $\alpha =(f_{i_{1}},\delta
_{1})\cdots (f_{i_{m}},\delta _{m})$ and $m\geq 1$. Then $\psi (\alpha)=\psi
(f_{i_{1}}\cdots f_{i_{m}})$.
\end{definition}

\subsection{Parameters of Decision Trees and Tables\label{7S1.4}}

\begin{definition}
Let $\psi $ be a partially bounded complexity measure. We extend the function $\psi $ to the
set $\mathcal{T}_{k}$. Let $\Gamma \in \mathcal{T}_{k}$. Then $\psi (\Gamma
)=\max \{\psi (\pi (\tau ))\}$, where the maximum is taken over all complete
paths $\tau $ of the decision tree $\Gamma $. For a given partially bounded complexity measure
$\psi $, the value $\psi (\Gamma )$ will be called the \emph{complexity of
the decision tree} $\Gamma $. The value $h(\Gamma )$ will be called the
\emph{depth of the decision tree} $\Gamma $.
\end{definition}

Let $\psi $ be a partially bounded complexity measure. We now describe the functions $\psi ^{d}
$, $\psi ^{a}$, $m_{\psi }$, $W_{\psi }$,  $N$,
 $M_{\psi }$  defined on the set $\mathcal{M}_{k}^{\infty }$  and functions
$Z$, $G$  defined on the set $\mathcal{M}_{2}^{\infty }$ and
taking values from the set $\omega $. By definition, the value of each of
these functions for $\Lambda $ is equal $0$. We also describe the function $%
l_{\psi }$ defined on the set $\mathcal{M}_{k}^{\infty }\times \omega $%
.  By definition, the value of this function for tuple  $(\Lambda ,n)$, $%
n\in \omega $, is equal $0$. Let $T\in \mathcal{M}_{k}^{\infty }\setminus
\{\Lambda \}$ and $n\in \omega $.

\begin{itemize}
\item $\psi ^{d}(T)=\min \{\psi (\Gamma )\}$, where the minimum is taken
over all deterministic decision trees $\Gamma $ for the table $T$.

\item $\psi ^{a}(T)=\min \{\psi (\Gamma )\}$, where the minimum is taken
over all nondeterministic decision trees $\Gamma $ for the table $T$.

\item $m_{\psi }(T)=\max \{\psi (f_{i}):f_{i}\in \operatorname{At}(T)\}$.

\item $W_{\psi }(T)=\psi (\operatorname{At}(T))$.

\item $N(T)$ is the number of rows in the table $T$.

\item If $T\in \mathcal{M}_{k}^{\infty c}$, then $M_{\psi }(T)=0$.
Let $T\notin \mathcal{M}_{k}^{\infty c}$, $\left\vert
\operatorname{At}(T)\right\vert =n$, and columns of the table $T$ be labeled with the
attributes $f_{t_{1}},\ldots ,f_{t_{n}}$. Let $\bar{\delta}=(\delta
_{1},\ldots ,\delta _{n})\in E_{k}^{n}$. Denote by $M_{\psi }(T,\bar{\delta}%
) $ the minimum number $p\in \omega $ for which there exist attributes $%
f_{t_{i_{1}}},\ldots ,f_{t_{i_{m}}}\in \operatorname{At}(T)$ such that $T(f_{t_{i_{1}}},%
\delta _{i_{1}})\cdots (f_{t_{i_{m}}},\delta _{i_{m}})\in \mathcal{M}%
_{k}^{\infty c}$ and $\psi (f_{t_{i_{1}}}\cdots f_{t_{i_{m}}})=p$.
Then $M_{\psi }(T)=\max \{M_{\psi }(T,\bar{\delta}):\bar{\delta}\in
E_{k}^{n}\}$.

\item Let $T\in \mathcal{M}_{2}^{\infty }$. A decision table $Q\in \mathcal{M%
}_{2}^{\infty }$ will be called \emph{complete} if $N(Q)=2^{|\operatorname{At}(Q)|}$. Then $%
Z(T)$ is the maximum number of columns in a complete table from $[T]$ if such tables exist and $0$ otherwise.
\index{Decision table with many-valued decisions!complete}

\item Let $T\in \mathcal{M}_{2}^{\infty }$. A word $\alpha \in \Omega _{2}(T)
$ will be called \emph{annihilating} word for the table $T$ if $T\alpha
=\Lambda $ and $\alpha$ does not contain letters $(f_i,\delta)$ and $(f_i,\sigma)$ such that $\delta \neq \sigma$. An annihilating word $\alpha $ for the table $T$ will be called
\emph{irreducible} if any subword of $\alpha $ obtained from $\alpha $ by
removal of some letters and different from $\alpha $ is not annihilating.
Then $G(T)$ is the maximum length of an irreducible annihilating word for
the table $T$ if such words exist and $0$ otherwise.
\index{Decision table with many-valued decisions!annihilating word}
\index{Decision table with many-valued decisions!annihilating word!irreducible}

\item Denote $\Omega
_{k}^{n}(T)=\{\alpha :\alpha \in \Omega _{k}(T),\psi (\alpha )\leq n\}$. A
finite set $U\subseteq \Omega _{k}^{n}(T)$ will be called $(\psi ,n)$-\emph{%
cover }of the table $T$ if $\bigcup_{\alpha \in U}\Delta (T\alpha
)=\Delta (T)$. The $(\psi ,n)$-cover $U$ will be called \emph{irreducible}
if each proper subset of $U$ is not a $(\psi ,n)$-cover of the table
$T$. Then $l_{\psi }(T,n)$ is the maximum cardinality of an irreducible $%
(\psi ,n)$-cover of the table $T$. It is clear that $\{\lambda\}$ is an irreducible $(\psi ,n)$-cover of the table $T$. Therefore $l_{\psi }(T,n) \ge 1$.
\index{Decision table with many-valued decisions!$(\psi ,n)$-cover}
\index{Decision table with many-valued decisions!$(\psi ,n)$-cover!irreducible}

\end{itemize}

For the bounded complexity measure $h$, we denote $W(T)=W_{h}(T)$,  $M(T,\bar{\delta})=M_{h}(T,\bar{\delta})$ and $M(T)=M_{h}(T)$. Note that $W(T)$ is the
number of columns in the table $T$.

\begin{ex}
We denote by $T_{0}$ the decision table shown in Figure   \ref{7fig1}. One can
show that $h^{d}(T_{0})=2$, $h^{a}(T_{0})=1$, $m_h(T_0)=1$, $%
W(T_{0})=3 $,  $N(T_{0})=6$, $M(T_0)=2$, $Z(T_{0})=2$, $G(T_{0})=3$,
$l_{h}(T_{0}, 0)=1$,  $l_{h}(T_{0}, 1)=3$ and $l_{h}(T_{0},n)=6$ for any
$n\in \omega\setminus \{0,1\}$.
\end{ex}

\section{Main Results\label{7S2}}

In this section, we consider results obtained for the function $\mathcal{H}^{\infty}_{\psi ,A}$ and
discuss closed classes of decision tables generated by information systems.

\subsection{Function $\mathcal{H}^{\infty}_{\protect\psi ,A}$\label{7S2.3}}

Let $\psi $ be a bounded complexity measure and $A$ be a nontrivial closed
class of decision tables from $\mathcal{M}_{k}^{\infty }$. We now define
possibly partial function $\mathcal{H}^{\infty}_{\psi ,A}:\omega \rightarrow \omega $%
. Let $n\in \omega $. If the set $\{\psi ^{d}(T):T\in A,\psi ^{a}(T)\leq n\}$
is infinite, then the value $\mathcal{H}^{\infty}_{\psi ,A}(n)$ is undefined.
Otherwise, $\mathcal{H}^{\infty}_{\psi ,A}(n)=\max \{\psi ^{d}(T):T\in A,\psi
^{a}(T)\leq n\}$.

The function $\mathcal{H}^{\infty}_{\psi ,A}$ characterizes the growth in the worst
case of the minimum complexity of deterministic decision trees for decision
tables from $A$ with the growth of the minimum complexity of
nondeterministic decision trees for these tables.

We now define possibly partial function $L_{\psi ,A}:\omega \rightarrow
\omega $. Let $n\in \omega $. If the set $\{l_{\psi }(T,n):T\in A\}$ is
infinite, then the value $L_{\psi ,A}(n)\ $is not defined. Otherwise, $%
L_{\psi ,A}(n)=\max \{l_{\psi }(T,n):T\in A\}$. One can show that  in this case $L_{\psi ,A}(n) \ge 1$.

The following statement describes a criterion for the function $\mathcal{H}^{\infty}%
_{\psi ,A}$ to be everywhere defined.

\begin{theorem}
\label{7T3} Let $\psi $ be a bounded complexity measure and $A$ be a nontrivial
closed class of decision tables from $\mathcal{M}_{k}^{\infty }$. Then the
function $\mathcal{H}^{\infty}_{\psi ,A}$ is everywhere defined if and only if the
function $L_{\psi ,A}$ is everywhere defined.
\end{theorem}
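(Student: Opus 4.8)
The plan is to establish the two implications separately, relating $\psi^a(T)$ and $\psi^d(T)$ to $l_\psi(T,n)$ through the structure of nondeterministic and deterministic decision trees. The key observation is that a nondeterministic decision tree for $T$ of complexity at most $n$ corresponds to a system of words in $\Omega_k^n(T)$ that covers $\Delta(T)$, while a deterministic decision tree essentially builds such a cover one query-path at a time with the branching constraint. The parameter $l_\psi(T,n)$ measures the worst-case size of an irreducible cover, which controls how large a deterministic tree may be forced to become.

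First I would prove: if $L_{\psi,A}$ is everywhere defined, then $\mathcal{H}^{\infty}_{\psi,A}$ is everywhere defined. Fix $n\in\omega$ and a table $T\in A$ with $\psi^a(T)\le n$. The complete paths of an optimal nondeterministic decision tree yield a $(\psi,n)$-cover $U_0$ of $T$ with $|U_0|$ bounded by the number of terminal nodes; but we cannot bound the number of terminal nodes directly, so instead I would pass to an irreducible sub-cover $U\subseteq U_0$, which is a $(\psi,n)$-cover with $|U|\le l_\psi(T,n)\le L_{\psi,A}(n)$. Each $\alpha\in U$ has $\psi(\alpha)\le n$, hence $|\alpha|\le n$ since $\psi$ is bounded from below; moreover $T\alpha\in\mathcal{M}_k^{\infty c}$ (the subtable has a common decision, because it arose as $T(\tau)$ for a path of a valid nondeterministic tree, or $T\alpha=\Lambda$). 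Now build a deterministic decision tree by querying, in some fixed order, all attributes appearing in the words of $U$; the total set of such attributes has size at most $n\cdot|U|\le n\cdot L_{\psi,A}(n)$, and $\psi$ of this attribute set is at most $n\cdot L_{\psi,A}(n)$ by the boundedness-from-above property. Since $U$ covers $\Delta(T)$, every row lands in a branch where some $\alpha\in U$ is fully tested, so the subtable at that leaf is contained in $T\alpha\in\mathcal{M}_k^{\infty c}$ and has a common decision to attach. This shows $\psi^d(T)\le n\cdot L_{\psi,A}(n)$, a bound independent of $T$, so $\{\psi^d(T):T\in A,\psi^a(T)\le n\}$ is finite and $\mathcal{H}^{\infty}_{\psi,A}(n)$ is defined.

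For the converse, suppose $L_{\psi,A}$ is not everywhere defined: there is $n$ and tables $T_j\in A$ with $l_\psi(T_j,n)\to\infty$. I would show $\psi^a(T_j)\le$ (some constant depending only on $n$) while $\psi^d(T_j)\to\infty$, contradicting that $\mathcal{H}^{\infty}_{\psi,A}$ is defined at that argument. The first part is the delicate one: from an irreducible $(\psi,n)$-cover of large cardinality one must construct a \emph{nondeterministic} decision tree of bounded complexity. The natural attempt is to take the set $\Omega_k^n(T_j)$ itself — the collection of all words of $\psi$-complexity at most $n$ — and form the nondeterministic tree whose complete paths are exactly these words; its complexity is at most $n$ by construction, but one must check that a valid common decision exists at each leaf, i.e. that $T_j\alpha\in\mathcal{M}_k^{\infty c}$ for every $\alpha$ with $\psi(\alpha)\le n$ — which need not hold. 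The fix is to use a cover only by words $\alpha$ with $T_j\alpha\in\mathcal{M}_k^{\infty c}$; the existence of such a cover with controlled complexity is where I expect the real work, and I would exploit the closed-class hypothesis together with the parameter $M_\psi$ and $l_\psi$ to produce it — essentially, for each row $\bar\delta$ there is a word testing at most $M_\psi(T_j)$-worth of attributes reaching a subtable with a common decision, but $M_\psi$ itself may be unbounded, so one instead argues that if every such cover were large then $l_\psi$ being large is actually \emph{forced} by the same phenomenon, and conversely boundedness of $\psi^a$ gives a small cover by common-decision-inducing words whose irreducible refinements stay small. Then for $\psi^d(T_j)\to\infty$: a deterministic decision tree for $T_j$ induces an irreducible $(\psi, h^d\!\text{-bound})$-cover-like object, and if $\psi^d(T_j)$ stayed bounded by a constant $c$ then all relevant words would have $\psi$-complexity $\le c$, forcing $l_\psi(T_j,c)$ — and hence, after comparing $c$ with $n$, $l_\psi(T_j,n)$ up to a bounded factor — to be bounded, a contradiction. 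The main obstacle is thus the construction in the converse direction of a \emph{bounded-complexity nondeterministic tree} from a table whose only assumed property is a large irreducible $(\psi,n)$-cover; I anticipate this requires a careful separation of the roles of $n$ (the cover complexity parameter) and the intrinsic quantity $\psi^a$, possibly routed through an auxiliary lemma (likely proved earlier in the full paper) relating $\psi^a(T)$, $M_\psi(T)$, and the minimal complexity of a cover by common-decision words.
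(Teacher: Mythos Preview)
Your forward direction is correct and matches the paper's argument: extract a $(\psi,n)$-cover from an optimal nondeterministic tree, pass to an irreducible subcover of size at most $L_{\psi,A}(n)$, and build a deterministic tree of complexity at most $nL_{\psi,A}(n)$.

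The converse direction has a genuine gap. You try to show that the original tables $T_j$ themselves satisfy $\psi^a(T_j)\le C$ and $\psi^d(T_j)\to\infty$, and you correctly sense that forcing $\psi^a(T_j)\le n$ is ``the delicate one'' --- in fact it is impossible in general. A large irreducible $(\psi,n)$-cover of $T_j$ says nothing about the \emph{decisions} attached to the rows of $T_j$: the table $T_j$ could have $\psi^a(T_j)$ arbitrarily large, or $\psi^d(T_j)=0$ (e.g.\ if all rows share a common decision), regardless of $l_\psi(T_j,n)$. Your attempt to bound $\psi^d(T_j)$ from below is similarly flawed: a deterministic tree of small complexity gives a \emph{small} cover, but $l_\psi(T_j,n)$ is the \emph{maximum} size of an irreducible cover, and large irreducible covers can coexist with small ones.

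The missing idea is precisely the operation you never invoke: \emph{changing of decisions}. The paper (Lemma~\ref{7M1}) takes an irreducible $(\psi,n)$-cover $U=\{\alpha_1,\dots,\alpha_l\}$ of $T$ and defines $T^\ast=J(\nu,T)$ where $\nu(\bar\delta)=\{j:\bar\delta\in\Delta(T\alpha_j)\}$. Irreducibility guarantees that for each $j$ some row receives exactly $\{j\}$. Now $\psi^a(T^\ast)\le n$ is immediate (the cover itself is a nondeterministic tree: along $\alpha_j$ output $j$), and $\psi^d(T^\ast)\ge\log_k l$ because any deterministic tree must have at least $l$ terminal nodes to separate the $l$ singleton decision sets. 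Since $A$ is closed under changing decisions, $T^\ast\in A$, and the argument goes through. You mention ``the closed-class hypothesis'' but route it through $M_\psi$; the correct use is this relabelling of decisions, which manufactures both the upper bound on $\psi^a$ and the lower bound on $\psi^d$ simultaneously.
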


We now describe two possible types of  behavior for everywhere defined function $\mathcal{H}_{\psi ,A}$.

Let $D=\{n_{i}:i\in \omega \}$ be an infinite subset of the set $\omega $ in
which, for any $i\in \omega $, $n_{i}<n_{i+1}$. We now define a function $%
H_{D}:\omega \rightarrow \omega $. Let $n\in \omega $. If $n<n_{0}$, then $%
H_{D}(n)=0$. If, for some $i\in \omega $, $n_{i}\leq n<n_{i+1}$, then $%
H_{D}(n)=n_{i}$.

\begin{theorem}
\label{7T4} Let $\psi $ be a bounded complexity measure, $A$ be a nontrivial
closed class of decision tables from $\mathcal{M}_{k}^{\infty }$, and the
function $\mathcal{H}^{\infty}_{\psi ,A}$ be everywhere defined. Then $\mathcal{H}^{\infty}_{\psi ,A}$ is a
nondecreasing function and $\mathcal{H}^{\infty}_{\psi ,A}(0)=0$. For this function, one of the following statements
holds:

{\rm (a)} If the function $\psi ^{d}$ is bounded from above on the class $A$, then
there is a nonnegative constant $c$ such that $\mathcal{H}^{\infty}_{\psi ,A}(n)\leq c$ for any
$n\in \omega $.

{\rm (b)} If the function $\psi ^{d}$ is not bounded from above on the class $A$,
then there exists an infinite subset $D$ of the set $\omega $ such that $%
\mathcal{H}^{\infty}_{\psi ,A}(n)\geq H_{D}(n)$ for any $n\in \omega $.
\end{theorem}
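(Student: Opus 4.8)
The plan is to split the argument into three parts. First, the easy structural facts: monotonicity of $\mathcal{H}^{\infty}_{\psi ,A}$ follows directly from its definition as $\max\{\psi^d(T):T\in A,\ \psi^a(T)\le n\}$, since enlarging $n$ only enlarges the set over which the maximum is taken; and $\mathcal{H}^{\infty}_{\psi ,A}(0)=0$ follows because $\psi^a(T)\le 0$ forces, by the positivity and boundedness-from-below properties of $\psi$, that $T$ has a nondeterministic decision tree of complexity $0$, hence (tracing the definitions) that $\operatorname{At}(\Gamma)=\emptyset$ and $T$ itself has a common decision, so $T\in\mathcal{M}_k^{\infty c}$ and $\psi^d(T)=0$ as well. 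Here I would use that $A$ is closed, so $\Lambda\in A$ and the max is over a nonempty set.

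Second, case (a). If $\psi^d$ is bounded on $A$, say $\psi^d(T)\le c$ for all $T\in A$, then trivially $\mathcal{H}^{\infty}_{\psi ,A}(n)=\max\{\psi^d(T):\dots\}\le c$ for every $n$, which is the claim. (Since $\mathcal{H}^{\infty}_{\psi ,A}$ is everywhere defined by hypothesis, each such max is a genuine finite number, so this needs no extra work.)

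Third, case (b), which is the substantive part. Assume $\psi^d$ is unbounded on $A$. I want an infinite set $D=\{n_0<n_1<\cdots\}$ with $\mathcal{H}^{\infty}_{\psi ,A}(n)\ge H_D(n)$ for all $n$; recall $H_D(n)=n_i$ when $n_i\le n<n_{i+1}$ and $0$ below $n_0$. The key point is to manufacture, for each target value, a table witnessing a large gap. Since $\psi^d$ is unbounded, pick tables $T_j\in A$ with $\psi^d(T_j)\to\infty$; let $a_j=\psi^a(T_j)$ and $d_j=\psi^d(T_j)$. Because $\mathcal{H}^{\infty}_{\psi ,A}$ is everywhere defined, for each fixed $n$ only finitely many tables in $A$ have $\psi^a\le n$, so necessarily $a_j\to\infty$ as well (otherwise infinitely many $T_j$ would sit in one finite level set with unbounded $\psi^d$, contradicting well-definedness). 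Now pass to a subsequence so that both $a_j$ and $d_j$ are strictly increasing and $d_j\ge a_j$ (the inequality $\psi^a(T)\le\psi^d(T)$ should hold in general: a deterministic decision tree for $T$ is in particular a nondeterministic one, so $\psi^a(T)\le\psi^d(T)$). Set $n_j=a_j$, $D=\{n_j:j\in\omega\}$. Then for any $n$ with $n_j\le n<n_{j+1}$ we have $T_j\in A$, $\psi^a(T_j)=a_j\le n$, hence $\mathcal{H}^{\infty}_{\psi ,A}(n)\ge\psi^d(T_j)=d_j\ge a_j=n_j=H_D(n)$; and for $n<n_0$ the bound $H_D(n)=0$ is automatic. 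This gives $\mathcal{H}^{\infty}_{\psi ,A}(n)\ge H_D(n)$ for all $n$.

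The main obstacle is the claim that unboundedness of $\psi^d$ on $A$ forces $\psi^a$ to be unbounded on the witnessing sequence — i.e., that one cannot have infinitely many tables with bounded $\psi^a$ but unbounded $\psi^d$. This is exactly where the everywhere-definedness hypothesis (equivalently, by Theorem \ref{7T3}, that $L_{\psi,A}$ is everywhere defined) is used: it says each set $\{\psi^d(T):T\in A,\ \psi^a(T)\le n\}$ is finite, so a subsequence with bounded $\psi^a$ has bounded $\psi^d$. I would also want to double-check the subsequence extraction so that $n_j<n_{j+1}$ strictly (discard repeats, which is harmless since we only need infinitely many distinct values of $a_j$, guaranteed by $a_j\to\infty$). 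The remaining steps — the two general inequalities $\psi^a\le\psi^d$ and the monotonicity of $\mathcal{H}^{\infty}_{\psi ,A}$ — are routine unwinding of the definitions of deterministic/nondeterministic decision trees and of $\psi$.
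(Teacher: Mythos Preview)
Your proposal is correct and follows essentially the same route as the paper: monotonicity from the definition, $\mathcal{H}^{\infty}_{\psi,A}(0)=0$ via positivity, case (a) trivially, and case (b) by showing $\psi^a$ must be unbounded on $A$ (else everywhere-definedness fails) and then using $\psi^a\le\psi^d$. Two small remarks: your sentence ``for each fixed $n$ only finitely many tables in $A$ have $\psi^a\le n$'' is false as stated---it is the set of \emph{values} $\{\psi^d(T):\psi^a(T)\le n\}$ that is finite, not the set of tables---though your parenthetical and final paragraph make clear you understand this; and the paper avoids the subsequence extraction entirely by simply taking $D=\{\psi^a(T):T\in A\}$, noting $0\in D$ since $\Lambda\in A$, and observing directly that $\mathcal{H}^{\infty}_{\psi,A}(n)\ge n$ for every $n\in D$.
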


The following statement shows a wide spectrum of behavior of the
everywhere defined function $\mathcal{H}^{\infty}_{\psi ,A}$ in the case when the function $\psi^{d}$ is not bounded from above on the class $A$.

\begin{theorem}
\label{7T5} Let $\varphi :\omega \rightarrow \omega $ be a nondecreasing
function such that $\varphi (n)\geq n$ for any $n\in \omega $ and $\varphi
(0)=0$. Then there exist a closed class $A$ of decision tables from $%
\mathcal{M}_{k}^{\infty }$ and a bounded complexity measure $\psi $ such
that the function $\mathcal{H}^{\infty}_{\psi ,A}$ is everywhere defined and $\varphi (n)\leq \mathcal{H}^{\infty}_{\psi ,A}(n)\leq \varphi (n)+n$ for any $n\in \omega
$.
\end{theorem}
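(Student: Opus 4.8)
The plan is to construct, for a given $\varphi$, a closed class $A$ together with a bounded complexity measure $\psi$ by building a family of "gadget" decision tables whose nondeterministic complexity is small (roughly bounded) while their deterministic complexity is forced to be large (roughly $\varphi$ of something). The natural building blocks are the complete decision tables with many-valued decisions alluded to in the introduction (Lemmas \ref{7M4} and \ref{7M6}), which already realize a large gap: deterministic depth unbounded, nondeterministic depth at most $3$. First I would fix, for each $m\in\omega$, a complete table $T_m\in\mathcal{M}_k^{\infty}$ with $\Theta(m)$ columns, $h^d(T_m)$ growing without bound (say $\ge m$), and $h^a(T_m)\le 3$; then I would take $A=\bigl[\{T_m:m\in\omega\}\bigr]$, which is closed by definition of the closure operator. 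The measure $\psi$ is where $\varphi$ enters: I would define $\psi$ as a weighted depth, assigning to the attributes $f_i$ appearing in the $m$-th gadget weights chosen so that the cheapest way for a deterministic tree to resolve $T_m$ costs about $\varphi(m)$, while any nondeterministic tree still costs only a bounded amount (since it uses at most $3$ queries per path, and we keep per-attribute weights from exploding in the wrong regime). Concretely, one sets $\psi(f_{i_1}\cdots f_{i_r})=\sum_j w(f_{i_j})$ with weights $w$ engineered from the increments $\varphi(m+1)-\varphi(m)$; this is a legitimate bounded complexity measure by the remark in Section \ref{7S1.3} that any $\psi$ satisfying \eqref{7E1} is bounded.

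The key steps, in order, are: (1) record the precise combinatorial properties of the gadget tables $T_m$ — number of columns, a lower bound on the number of attributes any deterministic tree must query (this is where $M(T_m,\bar\delta)$ or an annihilating-word argument gives the lower bound), and the existence of a short nondeterministic tree; (2) choose the attribute weights so that (i) the minimum $\psi$-cost of a deterministic decision tree for $T_m$ is at least $\varphi(m)$, exploiting that a deterministic tree has a path forced to query many attributes, and (ii) the minimum $\psi$-cost of a nondeterministic decision tree for $T_m$ is at most some fixed bound plus a controlled term, exploiting that a nondeterministic tree needs only a constant number of queries per complete path; (3) verify that the closure $A=\bigl[\{T_m:m\in\omega\}\bigr]$ does not introduce tables with larger $\psi^d$ relative to $\psi^a$ — that is, subtables obtained by removing columns or changing decisions are either simpler or are themselves among the controlled tables; (4) use Theorem \ref{7T3} (via a bound on $l_\psi(T,n)$ over $A$) to conclude $\mathcal{H}^{\infty}_{\psi,A}$ is everywhere defined; (5) finally compute the two-sided bound $\varphi(n)\le\mathcal{H}^{\infty}_{\psi,A}(n)\le\varphi(n)+n$ by reading off, for each $n$, which gadget $T_m$ realizes the maximum of $\psi^d$ subject to $\psi^a\le n$, and checking the arithmetic of the weights.

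For the lower bound $\mathcal{H}^{\infty}_{\psi,A}(n)\ge\varphi(n)$ I would exhibit, for each $n$, a single table $T\in A$ with $\psi^a(T)\le n$ and $\psi^d(T)\ge\varphi(n)$ — most likely a subtable of a sufficiently large gadget, cut down so that its nondeterministic $\psi$-cost is exactly $n$ while its deterministic $\psi$-cost is at least $\varphi(n)$; here the weights must be fine-tuned so the "budget" $n$ on the nondeterministic side translates to a deterministic lower bound of $\varphi(n)$. For the upper bound $\mathcal{H}^{\infty}_{\psi,A}(n)\le\varphi(n)+n$ I would argue that any $T\in A$ with $\psi^a(T)\le n$ is, after the closure operations, "contained in" a gadget whose parameter is at most roughly $n$, so that a deterministic tree can be assembled whose $\psi$-cost exceeds the nondeterministic cost by at most the slack $\varphi(n)-\varphi(\text{that parameter})$ plus a linear-in-$n$ overhead coming from the per-path query count of the nondeterministic witness; the additive $n$ is exactly this overhead, reflecting the "$+n$" that appears generically when one converts a depth-$d$ nondeterministic strategy into a deterministic one under a bounded measure.

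The main obstacle, I expect, is step (2)(i) together with step (5): one must choose the attribute weights so that the deterministic lower bound tracks $\varphi$ precisely (not just up to a constant factor) for an arbitrary nondecreasing $\varphi$ with $\varphi(n)\ge n$, while simultaneously keeping the closure $A$ genuinely closed and not accidentally admitting a table that breaks either inequality. Getting the bookkeeping of the weights right — so that the inequalities $\varphi(n)\le\mathcal{H}^{\infty}_{\psi,A}(n)\le\varphi(n)+n$ hold for \emph{every} $n$ and not merely along the subsequence of gadget sizes — is the delicate part; the interpolation function $H_D$ from Theorem \ref{7T4} suggests the right template, since between consecutive gadget thresholds the function $\mathcal{H}^{\infty}_{\psi,A}$ is constant, and one must verify that this staircase stays sandwiched between $\varphi(n)$ and $\varphi(n)+n$, which constrains how large the jumps $\varphi(n_{i+1})-\varphi(n_i)$ can be relative to the spacing $n_{i+1}-n_i$ — a condition one arranges by defining the gadget thresholds directly in terms of $\varphi$.
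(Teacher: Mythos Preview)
Your proposal takes a substantially different route from the paper, and the specific gadgets you choose create a genuine obstacle that you have not resolved.

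The paper does \emph{not} use the complete tables $T_k$ of Lemmas~\ref{7M4} and~\ref{7M6}. Instead it builds, for each $n\ge 1$, a very simple ``diagonal'' table $Q_n$: writing $\varphi(n)=nm+j$ with $0\le j\le n-1$, the table $Q_n$ has $m+2$ columns and $m+2$ rows (one column and one row are dropped when $j=0$), with ones on the diagonal, zeros elsewhere, and row $i$ labeled $\{i\}$. The weighted depth $\psi$ assigns weight $j$ to the first column and weight $n$ to the rest, using disjoint attribute sets across different $n$. Because each row of $Q_n$ is certified by a \emph{single} attribute query, $\psi^a(Q_n)=n$ exactly; and because a deterministic tree must locate the unique ``$1$'' by querying all but one column, $\psi^d(Q_n)=nm+j=\varphi(n)$ exactly, while $W_\psi(Q_n)=\varphi(n)+n$. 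The closure is then easy to control: any $T\in[Q_m]$ with $m\le n$ satisfies $\psi^d(T)\le W_\psi(Q_m)\le\varphi(n)+n$, and any $T\in[Q_m]$ with $m>n$ has at most one attribute of weight $\le n$, forcing $\psi^d(T)\le n$.

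Your choice of the complete tables $T_k$ is too coarse for this tight sandwich. First, for $k\ge 2$ one has $h^a(T_k)=3$ (not merely $\le 3$), so with integer weights $\psi^a(T_k)\ge 3$ and you cannot witness the lower bound $\mathcal{H}^{\infty}_{\psi,A}(n)\ge\varphi(n)$ at $n\in\{1,2\}$ using these gadgets at all. Second, you only know $h^d(T_k)\ge k-1$, not an exact value, so you cannot calibrate $\psi^d$ to hit $\varphi(n)$ precisely. Third, and most seriously, $T_k$ is a complete table with $m(k)=k(k{+}1)/2$ columns, so $W_\psi(T_k)$ is quadratic in $k$ while your deterministic lower bound is only linear in $k$; this quadratic slack destroys the upper bound $\mathcal{H}^{\infty}_{\psi,A}(n)\le\varphi(n)+n$, because tables in $[T_k]$ (which include \emph{every} complete table on every subset of $\operatorname{At}(T_k)$ with \emph{every} decision labeling) can have $\psi^d$ as large as $W_\psi(T_k)$ while still having $\psi^a$ of order $3w_k$. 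Your step~(3) --- controlling the closure --- is therefore not a bookkeeping issue but the heart of the matter, and it fails for these gadgets. The paper sidesteps all of this by using the sparse diagonal tables, for which $h^a=1$, $h^d$ is known exactly, and $W_\psi$ already equals the target upper bound.
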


Let $A$ be a nontrivial closed class of decision tables from $\mathcal{M}%
_{2}^{\infty }$ and $\psi $ be a bounded complexity measure.
Let $n\in \omega $. Denote $A_{\psi }(n)=\{T:T\in A,m_{\psi }(T)\leq n\}$.
Since $\Lambda \in A$, both sets $A_{\psi }(n)$ and $\{Z(T):T\in A_{\psi }(n)\}$ are
nonempty sets. We now define probably partial function $Z_{\psi ,A}:\omega
\rightarrow \omega $. If $\{Z(T):T\in A_{\psi }(n)\}$ is an infinite set, then the
value $Z_{\psi ,A}(n)$ is not defined. Otherwise, $Z_{\psi ,A}(n)=\max
\{Z(T):T\in A_{\psi }(n)\}$.

Let $n\in \omega $. Since $\Lambda \in A$, the set $\{G(T):T\in A_{\psi }(n)\}$ is
nonempty. We now define probably partial function $G_{\psi ,A}:\omega
\rightarrow \omega $. If $\{G(T):T\in A_{\psi }(n)\}$ is an infinite set,
then the value $G_{\psi ,A}(n)$ is not defined. Otherwise, $G_{\psi
,A}(n)=\max \{G(T):T\in A_{\psi }(n)\}$.

The following statement describes the criterion for the everywhere defined
function $\mathcal{H}^{\infty}_{\psi ,A}$ to be bounded from above by a polynomial. For simplicity, we consider here only decision tables from the set $\mathcal{M}%
_{2}^{\infty }$.

\begin{theorem}
\label{7T6} Let $A$ be a nontrivial closed class of decision tables from $\mathcal{M}%
_{2}^{\infty }$, $\psi $ be a bounded complexity measure and the function $%
\mathcal{H}^{\infty}_{\psi ,A}$ be everywhere defined. Then a polynomial $p_{0}$ such that $%
\mathcal{H}^{\infty}_{\psi ,A}(n)\leq p_{0}(n)$ for any $n\in \omega $ exists if and only if
there exist polynomials $p_{1}$, $p_{2}$, and $p_{3}$ such that $Z_{\psi
,A}(n)\leq p_{1}(n)$, $G_{\psi ,A}(n)\leq p_{2}(n)$ and $L_{\psi ,A}(n)\leq
2^{p_{3}(n)}$ for any $n\in \omega $.
\end{theorem}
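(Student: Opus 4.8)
The plan is to prove both directions by relating $\mathcal{H}^{\infty}_{\psi,A}(n)$ to the three auxiliary functions $Z_{\psi,A}$, $G_{\psi,A}$, and $L_{\psi,A}$ via two-sided bounds, presumably established in earlier parts of the paper or provable by the same techniques. For the ``if'' direction, I would start from the observation that a nondeterministic decision tree of complexity at most $n$ corresponds to an irreducible system of true decision rules covering all rows, so the number of rules is controlled by $L_{\psi,A}(n)$ and each rule has $\psi$-complexity at most $n$; in particular any table $T\in A$ with $\psi^a(T)\le n$ has all its ``relevant'' attributes appearing in words of $\psi$-value $\le n$, so (using boundedness from below, $\psi(\alpha)\ge|\alpha|$) the tree has depth $\le n$ and at most $2^{p_3(n)}$ leaves. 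The goal is then to build a deterministic decision tree for $T$ whose complexity is polynomial in $n$: one simulates the rule system, and the parameters $Z_{\psi,A}(n)$ and $G_{\psi,A}(n)$ enter because they bound, respectively, how large a complete subtable in $[T]$ can be (which forces branching) and the length of irreducible annihilating words (which bounds how deep one must probe before a subtable collapses to $\Lambda$). Assembling these gives $\psi^d(T)\le p_0(n)$ for a polynomial $p_0$ depending only on $p_1,p_2,p_3$ and $\psi$.

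For the ``only if'' direction, I would argue contrapositively: if any one of $Z_{\psi,A}$, $G_{\psi,A}$, $L_{\psi,A}$ is not bounded by a polynomial (resp.\ by $2^{\text{poly}}$ for $L$), then $\mathcal{H}^{\infty}_{\psi,A}$ is not bounded by any polynomial. Each of the three obstructions should be witnessed by an explicit family of tables in $A$: large complete subtables (the $Z$-witnesses) are exactly the tables for which any deterministic decision tree must have depth close to the number of columns while a short nondeterministic tree exists — this is the phenomenon flagged in the introduction via Lemmas \ref{7M4} and \ref{7M6}; long irreducible annihilating words ($G$-witnesses) force a deterministic tree to ask many questions before detecting inconsistency; and large irreducible $(\psi,n)$-covers ($L$-witnesses) force, by a counting/fooling-set argument, many leaves and hence large $\psi^d$, even though $\psi^a$ stays $\le n$. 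Here the fact that $A$ is closed under removal of columns and changing of decisions is essential: it lets me pass from a witness table to the subtables and relabelings needed to keep $\psi^a$ small while $\psi^d$ blows up, and it guarantees these modified tables remain in $A$.

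The main obstacle, I expect, is the quantitative bookkeeping in the ``if'' direction: one must show that the three polynomial bounds \emph{together} suffice, i.e.\ that no fourth independent parameter can make $\psi^d$ superpolynomial. This amounts to a structural decomposition theorem saying that for $T\in A$ with $\psi^a(T)\le n$, a deterministic decision tree of complexity $\mathrm{poly}(Z_{\psi,A}(n),G_{\psi,A}(n),\log L_{\psi,A}(n),n)$ always exists — presumably this is (or follows from) a bound of the shape $\psi^d(T)\le F\bigl(Z(T),G(T),\log l_\psi(T,n),n\bigr)$ for a fixed polynomial $F$, combined with $\psi^a(T)\le n \Rightarrow m_\psi(T)\le n$ so that $Z(T)\le Z_{\psi,A}(n)$ and $G(T)\le G_{\psi,A}(n)$. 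I would isolate this as a lemma and prove it by induction on the structure of $T$, splitting on an attribute that reduces the complete-subtable parameter or the cover size; the annihilating-word parameter $G$ controls the base case where the current subtable has a common decision. Once that lemma is in hand, Theorem \ref{7T6} follows by taking maxima over $A_\psi(n)$ and invoking Theorem \ref{7T3} (everywhere-definedness of $\mathcal{H}^{\infty}_{\psi,A}$, equivalently of $L_{\psi,A}$) to ensure all quantities in play are finite.
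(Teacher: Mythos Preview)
Your overall architecture matches the paper's: lower bounds on $\mathcal{H}^{\infty}_{\psi,A}$ in terms of each of $Z_{\psi,A}$, $G_{\psi,A}$, $\log L_{\psi,A}$ (Lemmas~\ref{7M2}, \ref{7M9}, \ref{7M12}) handle the ``only if'' direction contrapositively, and a single upper bound (Lemma~\ref{7M13}) handles the ``if'' direction. So the strategy is right.

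There is, however, a genuine gap in your ``if'' direction. The implication ``$\psi^a(T)\le n \Rightarrow m_\psi(T)\le n$'' is false: a table may carry irrelevant columns of arbitrarily large $\psi$-weight while still admitting a cheap nondeterministic tree that ignores them. Consequently you cannot conclude $Z(T)\le Z_{\psi,A}(n)$ or $G(T)\le G_{\psi,A}(n)$ directly from $\psi^a(T)\le n$. The paper repairs this exactly via the closure operations you reserved for the other direction: starting from an optimal nondeterministic tree $\Gamma$ with $\psi(\Gamma)\le n$, one passes to $T^{\ast}=J(\nu,I(\operatorname{At}(T)\setminus\operatorname{At}(\Gamma),T))$, so that $T^{\ast}\in A_\psi(n)$ and any deterministic tree for $T^{\ast}$ is one for $T$. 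Only then do $Z(T^{\ast})\le Z_{\psi,A}(n)$ and $G(T^{\ast})\le G_{\psi,A}(n)$ hold.

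Second, your proposed route to the key upper bound---an induction on the structure of $T$ that ``splits on an attribute reducing the complete-subtable parameter or the cover size''---is not the paper's argument and, as stated, it is not clear such a split always makes measurable progress. The paper instead invokes two imported inequalities: $\psi^d(T^{\ast})\le M_\psi(T^{\ast})\log_2 N(T^{\ast})$ (Lemma~\ref{7L2}) and $N(T^{\ast})\le (4W(T^{\ast}))^{Z(T^{\ast})}$ (Lemma~\ref{7M7}). It then bounds $W(T^{\ast})\le nL_{\psi,A}(n)$ from the irreducible cover, and bounds $M_\psi(T^{\ast})\le\max(n,\,nG_{\psi,A}(n))$ by splitting into the cases $\bar\delta\in\Delta(T^{\ast})$ (handled by $\psi^a\le n$ and Lemma~\ref{7M3}) and $\bar\delta\notin\Delta(T^{\ast})$ (handled by an irreducible annihilating word of length $\le G_{\psi,A}(n)$). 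This is where $G$ actually enters the upper bound, not as a base case of an induction but as a bound on $M_\psi(T^{\ast},\bar\delta)$ for tuples outside the table. If you want to keep your inductive scheme you would need to supply the missing monovariant; otherwise, the product bound $M_\psi\cdot\log_2 N$ is the cleaner path.
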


\subsection{Family of Closed Classes of Decision Tables\label{7S2.4}}

Let $U$ be a set and $\Phi =\{f_{0},f_{1},\ldots \}$ be a finite or
countable set of functions (attributes) defined on $U$ and taking values
from $E_{k}$. The pair $(U,\Phi )$ is called a $k$-\emph{information system}%
. A \emph{problem} over $(U,\Phi )$ is an arbitrary tuple $z=(U,\nu
,f_{i_{1}},\ldots ,$ $f_{i_{n}})$, where $n\in \omega \setminus \{0\}$, $\nu
:E_{k}^{n}\rightarrow \mathcal{P}(\omega )$ and $f_{i_{1}},\ldots ,f_{i_{n}}$
are functions from $\Phi $ with pairwise different indices $i_{1},\ldots
,i_{n}$. The problem $z$ is to determine a value from the set $\nu
(f_{i_{1}}(u),\ldots ,f_{i_{n}}(u))$ for a given $u\in U$. Various examples
of $k$-information systems can be found in
\cite{Moshkov05}.

We denote by $T(z)$ a decision table from $\mathcal{M}_{k}^{\infty }$ with $%
n $ columns labeled with attributes $f_{i_{1}},\ldots ,f_{i_{n}}$. A row $%
(\delta _{1},\ldots ,\delta _{n})\in E_{k}^{n}$ belongs to the table $T(z)$
if and only if the system of equations $\{f_{i_{1}}(x)=\delta _{1},\ldots
,f_{i_{n}}(x)=\delta _{n}\}$ has a solution from the set $U$. This row is
labeled with the set of decisions $\nu (\delta _{1},\ldots ,\delta _{n})$%
.

Let the algorithms for the problem $z$ solving be algorithms in which each
elementary operation consists in calculating the value of some attribute
from the set $\{f_{i_{1}},\ldots ,f_{i_{n}}\}$ on a given element $u\in U$.
Then, as a model of the problem $z$, we can use the decision table $T(z)$,
and as models of algorithms for the problem $z$ solving -- deterministic and
nondeterministic decision trees for the table $T(z)$.

Denote by $\operatorname{Probl}^{\infty }(U,\Phi )$ the set of problems over $(U,\Phi
)$ and $\operatorname{Tab}^{\infty }(U,\Phi )=\{T(z):z\in \operatorname{Probl}^{\infty
}(U,\Phi )\}$. One can show that $\operatorname{Tab}^{\infty }(U,\Phi )=[\operatorname{Tab}%
^{\infty }(U,\Phi )]$, i.e., $\operatorname{Tab}^{\infty }(U,\Phi )$ is a closed
class of decision tables from $\mathcal{M}_{k}^{\infty }$ \emph{generated}
by the information system $(U,\Phi )$.
\index{Decision table with many-valued decisions!closed class!generated by information system}

Closed classes of decision tables generated by $k$-information systems are
the most natural examples of closed classes. However, the notion of a closed
class is essentially wider. In particular, the union $\operatorname{Tab}^{\infty
}(U_{1},\Phi _{1})\cup \operatorname{Tab}^{\infty }(U_{2},\Phi _{2})$, where $%
(U_{1},\Phi _{1})$ and $(U_{2},\Phi _{2})$ are $k$-information systems, is a
closed class, but generally, we cannot find an information system $(U,\Phi )$
such that $\operatorname{Tab}^{\infty }(U,\Phi )=\operatorname{Tab}^{\infty }(U_{1},\Phi
_{1})\cup \operatorname{Tab}^{\infty }(U_{2},\Phi _{2})$.

\subsection{Example of Information System\label{7S2.5}}

Let $%
\mathbb{R}
$ be the set of real numbers and $F=\{f_{i}:i\in \omega \}$ be the set of
functions defined on $%
\mathbb{R}
$ and taking values from the set $E_{2}$ such that, for any $i\in \omega $
and $a\in
\mathbb{R}
$,%
\begin{equation*}
f_{i}(a)=\left\{
\begin{array}{ll}
0, & a<i, \\
1, & a\geq i.%
\end{array}%
\right.
\end{equation*}

Let $\psi $ be a bounded complexity measure and $A=\operatorname{Tab}^{\infty }(%
\mathbb{R}
,F)$. One can prove the following statements:

\begin{itemize}
\item The function $\psi^d$ is not bounded from above on the set $A$.

\item The function $L_{\psi ,A}$ is everywhere defined if and only if, for
any $n\in \omega $, the set $\{i:i\in \omega ,\psi (f_{i})\leq n\}$ is
finite.

\item A polynomial $p$ such that $L_{\psi ,A}(n)\leq 2^{p(n)}$ for any $n\in
\omega $ exists if and only if there exists a polynomial $q$ such that $%
\left\vert \{i:i\in \omega ,\psi (f_{i})\leq n\}\right\vert \leq 2^{q(n)}$
for any $n\in \omega $.

\item For any $n\in \omega $, $Z_{\psi ,A}(n)\leq 1$.

\item For any $n\in \omega $, $G_{\psi ,A}(n)\leq 2$.
\end{itemize}

\section{Proofs of Theorems \protect\ref{7T3}, \protect\ref{7T4} and \protect
\ref{7T5}\label{7S4}}

First, we consider a number of auxiliary statements.
It is not difficult to prove the following upper bound on the minimum
complexity of deterministic decision trees for a decision table.

\begin{lemma}
\label{7L1}For any partially bounded complexity measure $\psi $ and any table $T$ from $%
\mathcal{M}_{k}^{\infty }$,
\begin{equation*}
\psi ^{d}(T)\leq W_{\psi }(T).
\end{equation*}
\end{lemma}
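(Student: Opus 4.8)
The plan is to construct, for an arbitrary nonempty table $T \in \mathcal{M}_k^\infty$, a particular deterministic decision tree $\Gamma$ that queries \emph{all} the attributes of $T$ along every branch, and then observe that $\psi(\Gamma) \le \psi(\operatorname{At}(T)) = W_\psi(T)$. If $T = \Lambda$ there is nothing to prove since $\psi^d(\Lambda) = 0 = W_\psi(\Lambda)$ by the conventions in Section \ref{7S1.4}, so assume $T$ is nonempty with $\operatorname{At}(T) = \{f_{t_1},\ldots,f_{t_n\}}$.

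First I would build $\Gamma$ as the "full $k$-ary query tree" on the attributes $f_{t_1},\ldots,f_{t_n}$ in that fixed order: below the (unlabeled) root edge, place a node labeled $f_{t_1}$; from each node at level $j$ labeled $f_{t_j}$, draw $k$ edges labeled $0,1,\ldots,k-1$, each leading to a node labeled $f_{t_{j+1}}$ (for $j < n$) or to a terminal node (for $j = n$). Each complete path $\tau$ of $\Gamma$ then corresponds to a full assignment $\bar\delta = (\delta_1,\ldots,\delta_n) \in E_k^n$, and $T(\tau) = T(f_{t_1},\delta_1)\cdots(f_{t_n},\delta_n)$ is the subtable of $T$ consisting of the unique row equal to $\bar\delta$ if such a row exists, and $\Lambda$ otherwise. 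Next I would verify the five conditions in the definition of a deterministic decision tree: only one edge leaves the root (by construction); edges leaving any internal node carry pairwise different labels $0,\ldots,k-1$ (by construction); $\operatorname{At}(\Gamma) = \operatorname{At}(T) \subseteq \operatorname{At}(T)$; every row of $T$ lies in $T(\tau)$ for the path $\tau$ matching that row; and for every complete path $\tau$, either $T(\tau) = \Lambda$ or $T(\tau)$ is a single-row table, whose unique row's decision set is nonempty and hence contains a common decision — attach any such decision to the terminal node of $\tau$. Thus $\Gamma$ is a legitimate deterministic decision tree for $T$.

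Finally, for any complete path $\tau$ of $\Gamma$, the word $\pi(\tau)$ records the attributes $f_{t_1},\ldots,f_{t_n}$ (each exactly once), so by the extension of $\psi$ to words over the alphabet $\{(f_i,\delta)\}$ and then to finite subsets of $P$, we get $\psi(\pi(\tau)) = \psi(f_{t_1}\cdots f_{t_n}) = \psi(\operatorname{At}(T)) = W_\psi(T)$, using the commutativity property of $\psi$ to make the value independent of the ordering. Hence $\psi(\Gamma) = \max_\tau \psi(\pi(\tau)) = W_\psi(T)$, and therefore $\psi^d(T) \le \psi(\Gamma) = W_\psi(T)$.

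There is no real obstacle here; the only points requiring a moment's care are the bookkeeping that a "full assignment" subtable of $T$ has at most one row (because rows of $T$ are pairwise different), so that a common decision always exists on each nontrivial branch, and the invocation of commutativity to identify $\psi(f_{t_1}\cdots f_{t_n})$ with $\psi(\operatorname{At}(T))$ regardless of how the columns of $T$ happen to be ordered.
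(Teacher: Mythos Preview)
Your proposal is correct and is exactly the natural argument one would supply here. The paper itself does not give a proof of this lemma, remarking only that ``it is not difficult to prove,'' so there is no alternative approach to compare against; your full $k$-ary query tree construction and the use of commutativity to identify $\psi(f_{t_1}\cdots f_{t_n})$ with $W_\psi(T)$ are precisely what is intended.
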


The next upper bound on the minimum complexity of deterministic decision
trees for a decision table follows from Corollary 5.2 from \cite{Moshkov20}.

\begin{lemma}
\label{7L2} For any partially bounded complexity measure $\psi $ and any table $T$ from $%
\mathcal{M}_{k}^{\infty }$,
\begin{equation*}
\psi ^{d}(T)\leq \left\{
\begin{array}{ll}
0, & T\in \mathcal{M}_{k}^{\infty c}, \\
M_{\psi }(T)\log _{2}N(T), & T\notin \mathcal{M}_{k}^{\infty c}.%
\end{array}%
\right.
\end{equation*}
\end{lemma}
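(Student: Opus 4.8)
The final statement to prove is Lemma \ref{7L2}, which bounds $\psi^d(T)$ by $M_\psi(T)\log_2 N(T)$ for tables outside $\mathcal{M}_k^{\infty c}$. Since the excerpt explicitly says this "follows from Corollary 5.2 from \cite{Moshkov20}," my plan is to reconstruct the divide-and-conquer argument that such a corollary would encode.

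\medskip

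\textbf{Plan.} The case $T\in\mathcal{M}_k^{\infty c}$ is trivial: a table with a common decision admits the single-path decision tree consisting of the root, one edge, and one terminal node labeled with that common decision, which has complexity $\psi(\lambda)=0$. So assume $T\notin\mathcal{M}_k^{\infty c}$, with columns labeled $f_{t_1},\dots,f_{t_n}$. The key idea is to build a deterministic decision tree greedily: at each nonterminal node we have reached a subtable $T\alpha$ (for some word $\alpha$ over $\{(f_{t_j},\delta)\}$); if $T\alpha\in\mathcal{M}_k^{\infty c}$ we stop and label the terminal node with a common decision of $T\alpha$; otherwise we must choose an attribute to query. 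The first step is to show that at any subtable $T'=T\alpha\notin\mathcal{M}_k^{\infty c}$ encountered, one can choose a single attribute $f_{t_j}\in\operatorname{At}(T')$ and query it so that, along every outgoing edge, the number of rows drops to at most $N(T')/2$ — a "balanced splitting attribute." This is the standard halving argument: since $T'$ has no common decision, for each value-assignment $\bar\delta$ there is, by definition of $M_\psi$, a short word of attributes driving $T'$ to a table with a common decision; a counting/pigeonhole argument over these witness words produces an attribute each of whose two most populous value-classes has at most half the rows. (For $k>2$ one uses that some attribute-value reduces the row count by a factor $\le 1/2$; the precise combinatorial lemma is what Corollary 5.2 of \cite{Moshkov20} supplies.)

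\medskip

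\textbf{Key steps in order.} (1) Dispose of the common-decision case. (2) Prove the balanced-splitting lemma: for $T'\notin\mathcal{M}_k^{\infty c}$ there is $f\in\operatorname{At}(T')$ such that each subtable $T'(f,\delta)$, $\delta\in E_k$, has at most $\lfloor N(T')/2\rfloor$ rows. The witness attribute is found using the attributes realizing $M_\psi(T',\bar\delta)$ for the various $\bar\delta$, so the queried attribute $f$ additionally satisfies $\psi(f)\le\max_{\bar\delta}M_\psi(T',\bar\delta)\le M_\psi(T)$ — here I use that $M_\psi$ is monotone under passing to subtables of the form $T\alpha$, which follows because any witness word for a row of $T\alpha$ is also usable in $T$. (3) Iterate: build the tree by repeatedly applying the splitting lemma; every root-to-leaf path queries at most $\log_2 N(T)$ attributes (since the row count halves each step and an integer $\ge 1$ cannot be halved more than $\log_2 N(T)$ times), and each queried attribute $f$ contributes $\psi(f)\le M_\psi(T)$. (4) Bound the complexity of a complete path $\tau$ by the boundedness-from-above property of $\psi$: $\psi(\pi(\tau))\le\sum_j\psi(f_{i_j})\le M_\psi(T)\cdot(\text{number of queries})\le M_\psi(T)\log_2 N(T)$. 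Taking the max over paths gives $\psi(\Gamma)\le M_\psi(T)\log_2 N(T)$, hence $\psi^d(T)\le M_\psi(T)\log_2 N(T)$. (5) Verify the constructed $\Gamma$ is genuinely a deterministic decision tree for $T$ in the sense of the definition — edges at a node carry distinct values, every row reaches some leaf, and every leaf's decision is common for its subtable — all of which are immediate from the construction.

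\medskip

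\textbf{Main obstacle.} The crux is step (2), the balanced-splitting lemma, together with the claim that the splitting attribute can be chosen with $\psi$-weight at most $M_\psi(T)$. One must argue carefully that having no common decision forces the existence of an attribute that separates the rows "evenly enough," and simultaneously that this attribute lies among the bounded-weight attributes counted by $M_\psi$. Since the paper defers this to Corollary 5.2 of \cite{Moshkov20}, the honest proof here is short: cite that corollary, note that $M_\psi(T\alpha)\le M_\psi(T)$ and $N(T\alpha)\le N(T)$ for every $\alpha$, and observe that applying the divide-and-conquer construction from \cite{Moshkov20} with the complexity measure $\psi$ yields a deterministic decision tree of complexity at most $M_\psi(T)\log_2 N(T)$. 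The only thing to double-check is that the monotonicity $M_\psi(T\alpha)\le M_\psi(T)$ genuinely holds for words $\alpha$ containing repeated or conflicting letters as well — but conflicting letters make $T\alpha=\Lambda$, where $M_\psi=0$, and otherwise any sequence of attributes that drives a row of $T\alpha$ into a common-decision subtable does the same in $T$, so the bound is safe.
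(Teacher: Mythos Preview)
Your step~(2) --- the single-attribute balanced-splitting lemma --- is false, and this breaks steps~(3) and~(4). Take $T'\in\mathcal{M}_2^\infty$ with three columns, rows $(0,0,0),(0,0,1),(0,1,0),(1,0,0)$, and pairwise disjoint decision sets $\{1\},\{2\},\{3\},\{4\}$. Then $T'\notin\mathcal{M}_2^{\infty c}$, yet for every attribute $f_j$ the value $0$ is held by three of the four rows, so no single attribute $f$ satisfies $N(T'(f,\delta))\le\lfloor N(T')/2\rfloor$ for all $\delta$. Consequently you cannot bound the number of individual attribute queries along a root-to-leaf path by $\log_2 N(T)$, and the chain of inequalities in your step~(4) collapses.

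The argument underlying Corollary~5.2 of \cite{Moshkov20} proceeds in \emph{blocks} of attributes, not one attribute at a time. At a subtable $T'\notin\mathcal{M}_k^{\infty c}$ with columns $f_{t_1},\dots,f_{t_n}$, form the plurality tuple $\bar\sigma=(\sigma_1,\dots,\sigma_n)$, where $\sigma_j$ is a most frequent value of $f_{t_j}$ among the rows of $T'$. By the definition of $M_\psi$ there is a word $\beta=(f_{t_{i_1}},\sigma_{i_1})\cdots(f_{t_{i_m}},\sigma_{i_m})$ with $\psi(f_{t_{i_1}}\cdots f_{t_{i_m}})\le M_\psi(T',\bar\sigma)\le M_\psi(T)$ and $T'\beta\in\mathcal{M}_k^{\infty c}$. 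Query $f_{t_{i_1}},\dots,f_{t_{i_m}}$ in order: along the branch matching $\bar\sigma$ you reach $T'\beta$ and stop; along any branch that first deviates at $f_{t_{i_l}}$ to a value $\delta\neq\sigma_{i_l}$, the resulting subtable lies inside $T'(f_{t_{i_l}},\delta)$, whose row count is at most $\min(c,N(T')-c)\le N(T')/2$ (where $c$ is the plurality count for $f_{t_{i_l}}$). Thus each block costs at most $M_\psi(T)$ and halves the row count, giving at most $\log_2 N(T)$ blocks and, via boundedness from above, $\psi^d(T)\le M_\psi(T)\log_2 N(T)$. The paper itself gives no argument beyond the citation, so either reproduce this block construction or invoke the corollary directly; your current sketch does neither correctly.
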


\begin{lemma}
\label{7L6} For any partially bounded complexity measure $\psi $ and any table $T$ from $%
\mathcal{M}_{k}^{\infty }$,
\begin{equation*}
\psi ^{a}(T)\leq \psi ^{d}(T).
\end{equation*}
\end{lemma}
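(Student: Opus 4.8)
The plan is to show that every deterministic decision tree for a table $T$ is, in particular, a nondeterministic decision tree for $T$, after which the inequality follows immediately from the definitions of $\psi^a$ and $\psi^d$ as minima over the respective families of trees.

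First I would dispose of the trivial case $T=\Lambda$, where both $\psi^d(\Lambda)$ and $\psi^a(\Lambda)$ are $0$ by definition, so the inequality holds. Now assume $T\in\mathcal{M}_k^\infty\setminus\{\Lambda\}$ and let $\Gamma$ be an arbitrary deterministic decision tree for $T$. Comparing the two definitions, a deterministic decision tree is required to satisfy five conditions, while a nondeterministic decision tree is required to satisfy only three of them: $\operatorname{At}(\Gamma)\subseteq\operatorname{At}(T)$; for any row of $T$ there is a complete path $\tau$ of $\Gamma$ with the row in $T(\tau)$; and for any complete path $\tau$, either $T(\tau)=\Lambda$ or the decision at the terminal node of $\tau$ is a common decision for $T(\tau)$. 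These three conditions are precisely the last three conditions in the definition of a deterministic decision tree. Hence $\Gamma$ is also a nondeterministic decision tree for $T$ (the extra requirements — only one edge leaves the root, and edges leaving a non-root non-terminal node carry pairwise different numbers — are simply dropped).

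Consequently, the set of deterministic decision trees for $T$ is a subset of the set of nondeterministic decision trees for $T$. Since $\psi^d(T)=\min\{\psi(\Gamma):\Gamma \text{ a deterministic decision tree for }T\}$ and $\psi^a(T)=\min\{\psi(\Gamma):\Gamma \text{ a nondeterministic decision tree for }T\}$, taking the minimum of $\psi(\Gamma)$ over the larger set can only decrease (or keep equal) the value, so $\psi^a(T)\le\psi^d(T)$. There is essentially no obstacle here: the only thing to check carefully is that the three conditions defining a nondeterministic decision tree are indeed a sub-collection of the five conditions defining a deterministic one, which is immediate from the way the two definitions are stated. The lemma requires only a partially bounded complexity measure, and nothing in the argument uses the boundedness-from-below property, so the stated generality is attained for free.
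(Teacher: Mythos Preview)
Your proof is correct and follows essentially the same approach as the paper: handle the trivial case $T=\Lambda$ separately, then observe that every deterministic decision tree for $T$ is a nondeterministic decision tree for $T$, so the minimum defining $\psi^a(T)$ is taken over a superset and hence $\psi^a(T)\le\psi^d(T)$. The paper's proof is terser but the argument is identical.
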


\begin{proof}
Let $T\in \mathcal{M}_{k}^{\infty }$. If $T=\Lambda $, then $\psi
^{a}(T)=\psi ^{d}(T)=0$. Let $T\in \mathcal{M}_{k}^{\infty }\setminus
\{\Lambda \}$. It is clear that each deterministic decision tree for the
table $T$ is a nondeterministic decision tree for the table $T$. Therefore $%
\psi ^{a}(T)\leq \psi ^{d}(T)$.
\end{proof}

\begin{lemma}
\label{7M1}Let $A$ be a nontrivial closed class of decision tables from $\mathcal{M}%
_{k}^{\infty }$, $\psi $ be a bounded complexity measure, $T\in A$, $n\in
\omega $ and $l_{\psi }(T,n)>0$. Then there exists a mapping $\nu
:E_{k}^{W(T)}\rightarrow \mathcal{P}(\omega )$ such that, for the table $%
T^{\ast }=J(\nu ,T)$, $\psi ^{a}(T^{\ast })\leq n$ and $\psi ^{d}(T^{\ast
})\geq \log _{k}l_{\psi }(T,n)$.
\end{lemma}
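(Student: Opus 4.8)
The plan is to take the irreducible $(\psi,n)$-cover witnessing $l_\psi(T,n)$ and use it to design the new label mapping $\nu$. Let $U=\{\alpha_1,\dots,\alpha_s\}$ be an irreducible $(\psi,n)$-cover of $T$ with $s=l_\psi(T,n)$, so $\bigcup_{i=1}^s\Delta(T\alpha_i)=\Delta(T)$ and no proper subfamily covers $\Delta(T)$. Irreducibility means that for each $i$ there is a row $r_i\in\Delta(T\alpha_i)$ that lies in no other $\Delta(T\alpha_j)$ — call $r_i$ a \emph{private} row of $\alpha_i$. The idea is to assign decisions so that each $\alpha_i$ "points" to a distinct decision $i$, i.e. define $\nu$ so that for every row $r$ of $T$, the set $\nu(r)$ consists of exactly those indices $i\in\{1,\dots,s\}$ for which $r\in\Delta(T\alpha_i)$. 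Since $U$ is a cover, each $\nu(r)$ is nonempty, and since $W(T)<\infty$ the domain is $E_k^{W(T)}$, so $\nu$ is a legitimate changing-of-decisions map and $T^\ast=J(\nu,T)\in A$ because $A$ is closed.

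**The nondeterministic upper bound.**
For $\psi^a(T^\ast)\le n$ I would build a nondeterministic decision tree $\Gamma$ from the cover $U$ in the obvious way: from the root, one branch for each $\alpha_i$, where the branch spells out the query–answer pairs of $\alpha_i$ and ends at a terminal node labeled with the decision $i$. Then the complete path $\tau_i$ corresponding to $\alpha_i$ satisfies $T^\ast(\tau_i)=T^\ast\alpha_i=J(\nu,T)\alpha_i$, whose rows are exactly the rows of $T\alpha_i$; by the definition of $\nu$, every such row contains $i$ in its decision set, so $i$ is a common decision for $T^\ast(\tau_i)$ (handling $T^\ast\alpha_i=\Lambda$ separately, though it cannot happen here since each $\alpha_i$ has a private row). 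Because $U$ covers $\Delta(T)=\Delta(T^\ast)$, every row of $T^\ast$ lies on some path. Finally $\psi(\Gamma)=\max_i\psi(\pi(\tau_i))=\max_i\psi(\alpha_i)\le n$ since $U\subseteq\Omega_k^n(T)$. Here I am using the commutativity and the extension of $\psi$ to words over the alphabet of $(f_i,\delta)$ pairs, and the fact that $\alpha_i$ need not contain contradictory letters (or can be cleaned up if it does, without increasing $\psi$ by the nondecreasing property) — this is a small point to check but routine.

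**The deterministic lower bound.**
This is the heart of the argument. Take any deterministic decision tree $\Gamma$ for $T^\ast$; I must show $\psi(\Gamma)\ge\log_k l_\psi(T,n)=\log_k s$. Consider the $s$ private rows $r_1,\dots,r_s$; they are pairwise distinct rows of $T^\ast$, and $\nu(r_i)=\{i\}$ — crucially a singleton, by the privacy property. For each $i$, let $\tau^{(i)}$ be a complete path of $\Gamma$ with $r_i\in\Delta(T^\ast(\tau^{(i)}))$; the terminal node of $\tau^{(i)}$ must be labeled with a common decision of $T^\ast(\tau^{(i)})$, hence with the \emph{unique} element of $\nu(r_i)$, namely $i$. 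So the $s$ paths $\tau^{(1)},\dots,\tau^{(s)}$ end at terminal nodes carrying pairwise distinct labels, hence are pairwise distinct paths. In a deterministic decision tree, distinct complete paths are distinguished by the answers along them, and along any path the attribute sequence has $\psi$-value at most $\psi(\Gamma)$; since at a node with attribute of $\psi$-weight contributing, each query has at most $k$ possible answers, the number of distinct complete paths with $\psi$-value $\le\psi(\Gamma)$ is at most $k^{\psi(\Gamma)}$ (using $\psi(\alpha)\ge|\alpha|$ for the bounded measure, so a path of $\psi$-complexity $t$ has length $\le t$, giving at most $k^t$ leaves). Therefore $s\le k^{\psi(\Gamma)}$, i.e. $\psi(\Gamma)\ge\log_k s$, and taking the minimum over $\Gamma$ gives $\psi^d(T^\ast)\ge\log_k l_\psi(T,n)$.

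**Main obstacle.**
The delicate point is the counting step $s\le k^{\psi(\Gamma)}$: I need that the private rows force $s$ genuinely distinct leaves and that a deterministic tree of $\psi$-complexity $t$ has at most $k^t$ leaves. The first part is secured by the singleton decision sets $\nu(r_i)=\{i\}$ forcing distinct terminal labels; the second uses that deterministic branching is $k$-ary at each internal node and that $\psi\ge h$ bounds the depth of every complete path by $t$, so the number of leaves reachable is at most $k^t$. One should double-check the edge cases (the root has a single outgoing edge; a path may reach $T^\ast(\tau)=\Lambda$ — impossible for the paths through private rows) but these do not affect the bound.
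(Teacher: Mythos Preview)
Your proof is correct and follows essentially the same approach as the paper: define $\nu$ via membership in the irreducible $(\psi,n)$-cover $U$, use the private rows to force $s$ distinct terminal labels in any deterministic tree, bound the number of leaves by $k^{h(\Gamma)}$, and invoke $\psi\ge h$; the nondeterministic tree is likewise built directly from the words of $U$. The only cosmetic difference is that the paper first bounds $h^d(T^\ast)\ge\log_k s$ and then applies boundedness from below, whereas you fold the step $h(\Gamma)\le\psi(\Gamma)$ into the leaf-counting for an arbitrary $\Gamma$---same content.
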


\begin{proof}
Let $U$ be an irreducible $(\psi ,n)$-cover of the table $T$ such
that $\left\vert U\right\vert =l_{\psi }(T,n)$. Let $U=\{\alpha _{1},\ldots
,\alpha _{l}\}$, where $l=l_{\psi }(T,n)$. We now define a mapping $\nu
:E_{k}^{W(T)}\rightarrow \mathcal{P}(\omega )$. Let $\bar{\delta}\in
E_{k}^{W(T)}$. If $\bar{\delta}\notin \Delta (T)$, then $\nu (\bar{\delta}%
)=\{0\}$. Let $\bar{\delta}\in \Delta (T)$. Then $\nu (\bar{\delta}%
)\subseteq \{1,\ldots ,l\}$. For any $j\in \{1,\ldots ,l\}$, $j\in \nu (\bar{%
\delta})$ if and only if $\bar{\delta}\in \Delta (T\alpha _{j})$.
Since the
cover $U$ is an irreducible $(\psi ,n)$-cover of the table $%
T$, for each $j\in \{1,\ldots ,l\}$, there is a row $\bar{\delta}_{j}$
of the table $T$ such that $\nu (\bar{\delta}_{j})=\{j\}$. Set $T^{\ast }=J(\nu ,T)$. Let $\Gamma $ be a
deterministic decision tree for the table $T^{\ast }$ such that $h(\Gamma
)=h^{d}(T^{\ast })$. Denote by $L_{t}(\Gamma )$ the number of terminal nodes
of $\Gamma $. Evidently, $l\leq L_{t}(\Gamma )$. One can show that $%
L_{t}(\Gamma )\leq k^{h(\Gamma )}$. Therefore $0<l\leq k^{h(\Gamma )}$.
Hence $h(\Gamma )\geq \log _{k}l$. Taking into account that $h(\Gamma
)=h^{d}(T^{\ast })$, we obtain $h^{d}(T^{\ast })\geq \log _{k}l_{\psi }(T,n)$%
. Using boundedness from below property of the function $\psi $, we obtain $%
\psi ^{d}(T^{\ast })\geq \log _{k}l_{\psi }(T,n)$. It is easy to construct a
nondeterministic decision tree $G$ for the table $T^{\ast }$ with $l=l_{\psi
}(T,n)$ complete paths $\tau _{1},\ldots ,\tau _{l}$ such that
$\pi(\tau _{1})=\alpha_1,\ldots ,\pi(\tau _{l})=\alpha_l$. It is clear that $\psi (G)\leq n$. Therefore $%
\psi ^{a}(T^{\ast })\leq n$.
\end{proof}

\begin{proof}[Proof of Theorem \protect\ref{7T3}]
Let the function $L_{\psi ,A}$ be everywhere defined. Let $T\in A$ and $\psi
^{a}(T)\leq n$. Let $\Gamma $ be a nondeterministic decision tree for the
table $T$ such that $\psi (\Gamma )=\psi ^{a}(T)$ and $U$ be the set of
words $\pi(\tau)$ corresponding to the complete paths $\tau$ in $\Gamma $. Then $U$ is a $(\psi
,n)$-cover of the table $T$. Let $U^{\prime }$ be a subset of the
set $U$, which is an irreducible $(\psi ,n)$-cover of the table $T$.
Then $\left\vert U^{\prime }\right\vert \leq L_{\psi ,A}(n)$.
Let $U^{\prime }=\{\alpha_1, \ldots , \alpha_t\}$.
We now
consider a deterministic decision tree $G$ for the table $T$, which, for a given $\bar{\delta}\in \Delta (T)$,
sequentially iterates through the words $\alpha_i$ from $U^{\prime }$ and check if $\bar{\delta}\in \Delta (T\alpha_i )$ by computing values of attributes from $\alpha_i$. For the first $i$ such that $\bar{\delta}\in \Delta (T\alpha_i )$, the decision tree $G$ returns a common decision for the table $T\alpha_i
$. Using boundedness from above property of the function $\psi $, we obtain
that $\psi (G)\leq nL_{\psi ,A}(n)$. Therefore $\mathcal{H}^{\infty}_{\psi ,A}(n)\leq nL_{\psi
,A}(n)$.

Let the function $L_{\psi ,A}$ be not everywhere defined. Then there exist
a number $n\in \omega $ and an infinite sequence of tables $%
T_{0},T_{1},\ldots $ from $A$ such that $l_{\psi }(T_{0},n)<l_{\psi
}(T_{1},n)<\cdots $. Let $i\in \omega $. Using Lemma \ref{7M1}, we obtain
that there exists a mapping $\nu :E_{k}^{W(T_{i})}\rightarrow \mathcal{P}%
(\omega )$ such that, for the table $T_{i}^{\ast }=J(\nu ,T_{i})$, $\psi
^{a}(T_{i}^{\ast })\leq n$ and $\psi ^{d}(T_{i}^{\ast })\geq \log
_{k}l_{\psi }(T_{i},n)$. Evidently, $T_{i}^{\ast } \in A$. Therefore, the value $\mathcal{H}^{\infty}_{\psi ,A}(n)$ is not
defined.
\end{proof}

\begin{proof}[Proof of Theorem \protect\ref{7T4}]
Immediately from the definition of the function $\mathcal{H}^{\infty}_{\psi ,A}$ it follows
that $\mathcal{H}^{\infty}_{\psi ,A}(n)\leq \mathcal{H}^{\infty}_{\psi ,A}(n+1)$ for any $n\in \omega $. Let $T\in A$ and $\psi ^a(T)\leq 0$. Using
positivity property of the function $\psi $, one can show that $T \in \mathcal{M}_{k}^{\infty c}$. From Lemma \ref{7L2} it follows that $\psi ^d(T)= 0$.
Therefore $\mathcal{H}^{\infty}_{\psi ,A}(0)=0$.

(a) Let there exist a nonnegative constant $c\ $such that $\psi ^{d}(T)\leq
c $ for any table $T\in A$. Then, evidently, $\mathcal{H}^{\infty}_{\psi ,A}(n)\leq c$ for any $%
n\in \omega $.

(b) Let there be no a nonnegative constant $c$ such that $\psi ^{d}(T)\leq c$
for any table $T\in A$. Let us assume that there exists a nonnegative
constant $d\ $such that $\psi ^{a}(T)\leq d$ for any table $T\in A$. Then
the value $\mathcal{H}^{\infty}_{\psi ,A}(d)$ is not defined but this is impossible. Therefore
the set $D=\{\psi ^{a}(T):T\in A\}$ is infinite. Since $\Lambda \in A$, $0 \in D$.
By Lemma \ref{7L6}, $\psi
^{a}(T)\leq \psi ^{d}(T)$ for any table $T\in A$. Hence $\mathcal{H}^{\infty}_{\psi ,A}(n)\geq
n $ for any $n\in D$. Taking into account that $\mathcal{H}^{\infty}_{\psi ,A}$ is a
nondecreasing function, we obtain that $\mathcal{H}^{\infty}_{\psi ,A}(n)\geq H_{D}(n)$ for any
$n\in \omega $.
\end{proof}

\begin{proof}[Proof of Theorem \protect\ref{7T5}]
For each $n\in \omega \setminus \{0\}$, we define a decision table $%
Q_{n} $.  The
union of closures of these tables forms a closed class $A$. In the same
time, we define a weighted depth $\psi $.

Let $n\in \omega \setminus \{0\}$ and $\varphi (n)=nm+j$, where $0\leq j\leq
n-1$. Denote by $Q_{n}$ a decision table from $\mathcal{M}_{2}^{\infty }$
with $m+2$ columns labeled with pairwise different attributes $%
f_{i(1,n)},\ldots ,f_{i(m+2,n)}$, respectively, and $m+2$ rows labeled
with the sets of decisions $\{1\},\ldots ,\{m+2\}$, respectively. The table $%
Q_{n}$ is filled with zeros with the exception of the main diagonal that is
filled with ones. If $j=0$, then we should remove from the table $Q_{n}$ the
first row and the first column. If $j>0$, then $\psi (f_{i(1,n)})=j$. For
$t=2,\ldots ,m+2$, $\psi (f_{i(t,n)})=n$. We choose attributes such that  $\operatorname{At}(Q_{l})\cap \operatorname{At}(Q_{t})=\emptyset $ if $l\neq t$. If $f_{i}\in P\setminus \bigcup_{n\in
\omega \setminus \{0\}}\operatorname{At}(Q_{n})$, then $\psi (f_{i})=1$.
Denote $A=[\{Q_{1},Q_{2},\ldots
\}] $.

One can show that, for any $n\in \omega $, $L_{\psi ,A}(n)\leq \varphi (n)+2$%
, i.e., the function $L_{\psi ,A}$ is everywhere defined. By Theorem \ref{7T3}%
, the function $\mathcal{H}^{\infty}_{\psi ,A}$ is everywhere defined. By Theorem \ref{7T4}, $\mathcal{H}^{\infty}_{\psi
,A}(0)=0$. Let $n\in \omega \setminus \{0\}$. One can show that $\psi
^{a}(Q_{n})=n$ and $\psi ^{d}(Q_{n})=nm+j=\varphi (n)$. Hence $H_{\psi
,A}(n)\geq \varphi (n)$.

Let us show that, for any $n\in \omega $, $\mathcal{H}^{\infty}_{\psi ,A}(n)\leq \varphi (n)+n$%
. Evidently, this inequality holds if $n=0$. Let $n>0$, $T\in A$ and $%
\psi ^{a}(T)\leq n$. Let $T\in \lbrack \{Q_{1},\ldots ,Q_{n}\}]$. By Lemma \ref{7L1}, $\psi ^{d}(T)\leq W_{\psi }(T)\leq $ $\varphi (n)+n$. Let $T\in
\lbrack \{Q_{n+1},Q_{n+2},\ldots \}]$. Let $\Gamma $ be a nondeterministic
decision tree for the table $T$ such that $\psi (\Gamma )\leq n$. Using
commutativity and nondecreasing properties of the function $\psi $, we obtain that
$\psi (f_{i}) \le n$ for any $f_{i}\in \operatorname{At}(\Gamma)$. It is clear that $\left\vert
\{f_{i}:f_{i}\in \operatorname{At}(T),\psi (f_{i})\leq n\}\right\vert \leq 1$. Using these facts, one can show that $\psi^d(T)\leq n$.
\end{proof}

\section{Proof of Theorem \protect\ref{7T6} \label{7S5}}

We preface the proof of the theorem with several auxiliary statements.

\begin{lemma}
\label{7M2}Let $A$ be a nontrivial closed class of decision tables from $\mathcal{M}%
_{k}^{\infty }$, $\psi $ be a bounded complexity measure, and the function $%
L_{\psi ,A}$ be everywhere defined. Then, for any $n\in \omega $ such that $%
L_{\psi ,A}(n)>0$, $\mathcal{H}^{\infty}_{\psi ,A}(n)\geq \log _{k}L_{\psi ,A}(n)$.
\end{lemma}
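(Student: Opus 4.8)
The plan is to fix $n \in \omega$ with $L_{\psi,A}(n) > 0$ and exhibit a single table in $A$ whose deterministic complexity is at least $\log_k L_{\psi,A}(n)$ while its nondeterministic complexity is at most $n$; this immediately gives $\mathcal{H}^{\infty}_{\psi,A}(n) \geq \log_k L_{\psi,A}(n)$ from the definition of $\mathcal{H}^{\infty}_{\psi,A}$. First I would use that $L_{\psi,A}$ is everywhere defined to conclude that the set $\{l_\psi(T,n) : T \in A\}$ is finite, so the maximum $L_{\psi,A}(n) = \max\{l_\psi(T,n) : T \in A\}$ is attained by some table $T \in A$; that is, there is a table $T \in A$ with $l_\psi(T,n) = L_{\psi,A}(n) > 0$.

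Next I would apply Lemma \ref{7M1} to this $T$ and this $n$: since $l_\psi(T,n) > 0$, there is a mapping $\nu : E_k^{W(T)} \rightarrow \mathcal{P}(\omega)$ such that the table $T^\ast = J(\nu, T)$ satisfies $\psi^a(T^\ast) \leq n$ and $\psi^d(T^\ast) \geq \log_k l_\psi(T,n) = \log_k L_{\psi,A}(n)$. The only remaining point to check is that $T^\ast$ lies in the class $A$. Since $A$ is a closed class and $T \in A$, and $T^\ast = J(\nu, T)$ is obtained from $T$ purely by a changing-of-decisions operation (so $T^\ast \in [T] \subseteq [A] = A$), we indeed have $T^\ast \in A$.

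Finally, because $T^\ast \in A$ and $\psi^a(T^\ast) \leq n$, the value $\psi^d(T^\ast)$ is one of the values over which the maximum defining $\mathcal{H}^{\infty}_{\psi,A}(n)$ is taken; hence $\mathcal{H}^{\infty}_{\psi,A}(n) \geq \psi^d(T^\ast) \geq \log_k L_{\psi,A}(n)$, which is the claim. I do not anticipate a genuine obstacle here: the entire content is packaged inside Lemma \ref{7M1}, and the argument is essentially just (i) noting the maximum defining $L_{\psi,A}(n)$ is attained, (ii) invoking Lemma \ref{7M1}, and (iii) checking closure of $A$ under $J(\nu,\cdot)$. The one place to be slightly careful is that $\mathcal{H}^{\infty}_{\psi,A}(n)$ must itself be defined for the inequality to make sense — but under the hypothesis that $L_{\psi,A}$ is everywhere defined, Theorem \ref{7T3} guarantees $\mathcal{H}^{\infty}_{\psi,A}$ is everywhere defined, so this is not an issue.
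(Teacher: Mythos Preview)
Your proposal is correct and follows essentially the same approach as the paper: pick $T\in A$ attaining $l_\psi(T,n)=L_{\psi,A}(n)$, apply Lemma~\ref{7M1} to get $T^\ast=J(\nu,T)\in A$ with $\psi^a(T^\ast)\le n$ and $\psi^d(T^\ast)\ge\log_k L_{\psi,A}(n)$, and invoke Theorem~\ref{7T3} for definedness of $\mathcal{H}^{\infty}_{\psi,A}(n)$. Your write-up is in fact slightly more explicit than the paper's (you spell out why $T^\ast\in A$), but the argument is the same.
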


\begin{proof}
By Theorem \ref{7T3}, the function $\mathcal{H}^{\infty}_{\psi ,A}$ is everywhere defined.
Let $n\in \omega $ and $L_{\psi ,A}(n)>0$. Let $T\in A$ and $l_{\psi
}(T,n)=L_{\psi ,A}(n)$. Using Lemma \ref{7M1}, we obtain that there exists a
mapping $\nu :E_{k}^{W(T)}\rightarrow \mathcal{P}(\omega )$ such that, for
the table $T^{\ast }=J(\nu ,T)$, $\psi ^{a}(T^{\ast })\leq n$ and $\psi
^{d}(T^{\ast })\geq \log _{k}l_{\psi }(T,n)=\log _{k}L_{\psi ,A}(n)$.
Therefore $\mathcal{H}^{\infty}_{\psi ,A}(n)\geq \log _{k}L_{\psi ,A}(n)$.
\end{proof}

\begin{figure}[tbp]
\centering
\includegraphics[width=0.5\columnwidth]{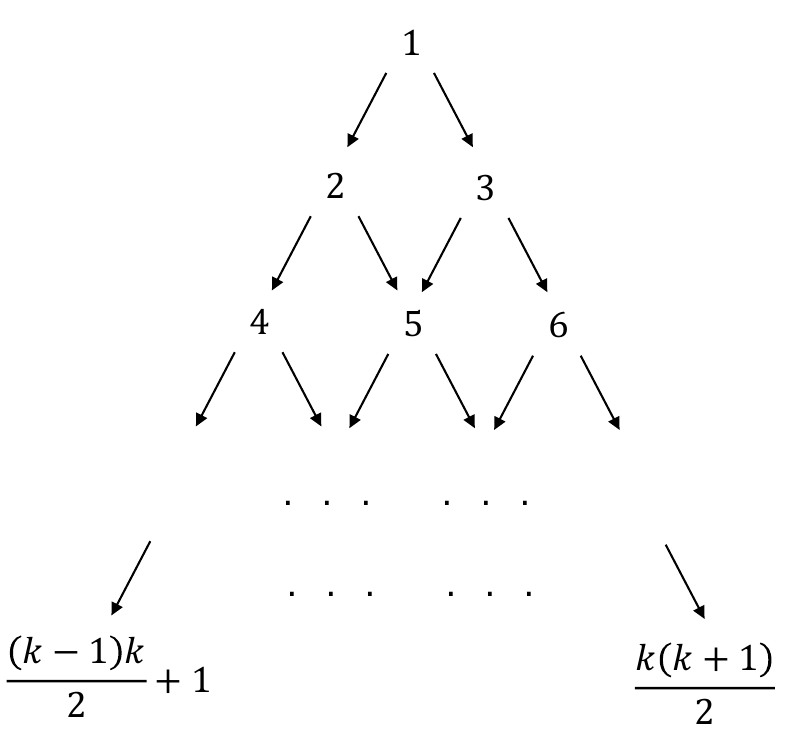}
\caption{Graph $G_{k}$}
\label{7fig5}
\end{figure}

It is not difficult to prove the following statement.

\begin{lemma}
\label{7M3}Let $\psi $ be a bounded complexity measure and $T\in \mathcal{M%
}_{k}^{\infty }\setminus \mathcal{M}_{k}^{\infty c}$. Then $\psi ^{a}(T)=\max \{M_{\psi}(T,\bar{\delta}):\bar{\delta}%
\in \Delta (T)\}$.
\end{lemma}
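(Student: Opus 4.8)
\textbf{Proof proposal for Lemma \ref{7M3}.}

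The plan is to prove the two inequalities $\psi^{a}(T) \leq \max\{M_{\psi}(T,\bar{\delta}):\bar{\delta}\in\Delta(T)\}$ and $\psi^{a}(T) \geq \max\{M_{\psi}(T,\bar{\delta}):\bar{\delta}\in\Delta(T)\}$ separately. Fix $T\in\mathcal{M}_{k}^{\infty}\setminus\mathcal{M}_{k}^{\infty c}$ with columns labeled by $f_{t_{1}},\ldots,f_{t_{n}}$, and denote $P=\max\{M_{\psi}(T,\bar{\delta}):\bar{\delta}\in\Delta(T)\}$. Note that since $T\notin\mathcal{M}_{k}^{\infty c}$, for every row $\bar{\delta}\in\Delta(T)$ the value $M_{\psi}(T,\bar{\delta})$ is a genuine minimum over nonempty sets of attribute tuples (one can reach a subtable in $\mathcal{M}_{k}^{\infty c}$, in the worst case by specifying all columns, so the set over which the minimum is taken is nonempty).

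For the upper bound $\psi^{a}(T)\leq P$, I would construct a nondeterministic decision tree $\Gamma$ for $T$ of complexity at most $P$ as follows. For each row $\bar{\delta}=(\delta_{1},\ldots,\delta_{n})\in\Delta(T)$, choose attributes $f_{t_{i_{1}}},\ldots,f_{t_{i_{m}}}$ witnessing $M_{\psi}(T,\bar{\delta})$, i.e. $T(f_{t_{i_{1}}},\delta_{i_{1}})\cdots(f_{t_{i_{m}}},\delta_{i_{m}})\in\mathcal{M}_{k}^{\infty c}$ and $\psi(f_{t_{i_{1}}}\cdots f_{t_{i_{m}}})=M_{\psi}(T,\bar{\delta})\leq P$. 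Build from the root of $\Gamma$ a separate path that queries $f_{t_{i_{1}}},\ldots,f_{t_{i_{m}}}$ in order, following edges labeled $\delta_{i_{1}},\ldots,\delta_{i_{m}}$, and ending in a terminal node labeled with a common decision of the (nonempty) subtable $T(f_{t_{i_{1}}},\delta_{i_{1}})\cdots(f_{t_{i_{m}}},\delta_{i_{m}})$. Taking the union of these paths over all rows gives a $k$-decision tree $\Gamma$ with $\operatorname{At}(\Gamma)\subseteq\operatorname{At}(T)$. For the path $\tau$ built from row $\bar{\delta}$, the word $\pi(\tau)$ equals $(f_{t_{i_{1}}},\delta_{i_{1}})\cdots(f_{t_{i_{m}}},\delta_{i_{m}})$, so $\bar{\delta}\in\Delta(T(\tau))$ and the terminal decision is common for $T(\tau)$; thus both defining conditions of a nondeterministic decision tree hold. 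Since $\psi(\pi(\tau))=\psi(f_{t_{i_{1}}}\cdots f_{t_{i_{m}}})\leq P$ for every path, $\psi(\Gamma)\leq P$, hence $\psi^{a}(T)\leq P$.

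For the lower bound $\psi^{a}(T)\geq P$, let $\Gamma$ be any nondeterministic decision tree for $T$ with $\psi(\Gamma)=\psi^{a}(T)$, and fix a row $\bar{\delta}^{*}\in\Delta(T)$ attaining $M_{\psi}(T,\bar{\delta}^{*})=P$. By the definition of a nondeterministic decision tree, there is a complete path $\tau$ in $\Gamma$ with $\bar{\delta}^{*}\in\Delta(T(\tau))$, and either $T(\tau)=\Lambda$ or the terminal decision of $\tau$ is common for $T(\tau)$; since $\bar{\delta}^{*}\in\Delta(T(\tau))$ the subtable is nonempty, so $T(\tau)\in\mathcal{M}_{k}^{\infty c}$. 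Write $\pi(\tau)=(f_{j_{1}},\sigma_{1})\cdots(f_{j_{r}},\sigma_{r})$; because $\bar{\delta}^{*}$ survives in $T(\tau)$, each $\sigma_{s}$ is exactly the value of attribute $f_{j_{s}}$ in row $\bar{\delta}^{*}$ (so this is a tuple of the form appearing in the definition of $M_{\psi}(T,\bar{\delta}^{*})$, possibly after discarding repeated attributes, which cannot increase $\psi$ by commutativity and the fact that $\psi$ on attribute sets is well-defined). Hence $T(\tau)$ is a subtable of the type required in the definition of $M_{\psi}(T,\bar{\delta}^{*})$, so $M_{\psi}(T,\bar{\delta}^{*})\leq\psi(f_{j_{1}}\cdots f_{j_{r}})=\psi(\pi(\tau))\leq\psi(\Gamma)=\psi^{a}(T)$, giving $P\leq\psi^{a}(T)$. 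Combining the two bounds yields the claimed equality.

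The only mildly delicate point — the part I would be most careful about — is the handling in the lower-bound argument of paths $\tau$ whose word $\pi(\tau)$ repeats an attribute or contains an attribute not forced by $\bar{\delta}^{*}$; but since $\bar{\delta}^{*}\in\Delta(T(\tau))$, every letter of $\pi(\tau)$ must be consistent with $\bar{\delta}^{*}$, so no conflicting letters occur, and collapsing repeated attributes only decreases or preserves $\psi$ by the commutativity property and the definition of $\psi$ on finite attribute sets. This is routine once stated carefully, so I expect no real obstacle; the lemma is essentially a direct unfolding of the definitions of $\psi^{a}$ and $M_{\psi}$.
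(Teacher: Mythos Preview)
Your argument is correct. The paper itself omits the proof of this lemma, stating only that ``it is not difficult to prove the following statement,'' and your two-inequality argument---building a nondeterministic tree from witnesses of $M_{\psi}(T,\bar{\delta})$ for the upper bound, and reading off such a witness from an optimal tree for the lower bound---is exactly the natural unfolding of the definitions that the authors had in mind.
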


Let $k\in \omega \setminus \{0\}$. Let us consider the graph $G_{k}$
depicted in Figure   \ref{7fig5}. Nodes of this graph are divided into $k$ layers
numbered from top to bottom. The number of nodes in this graph is equal to $%
m(k)=\frac{k(k+1)}{2}$. Nodes of the graph $G_{k}$ are numbered as it is
shown in Figure   \ref{7fig5}. We denote by $l(i)$ the number of the left-hand child of
the node $i$ and by $p(i)$ the number of the right-hand child of the node $i$%
. We now define mapping $\nu _{k}:E_{2}^{m(k)}\rightarrow \mathcal{P}(\omega
)$. Let $\bar{\delta}=(\delta _{1},\ldots ,\delta _{m(k)})\in E_{2}^{m(k)}$.
Then $\nu _{k}(\bar{\delta})\subseteq \{0,1,\ldots ,m(k)\}$. The number $0$
belongs to the set $\nu _{k}(\bar{\delta})$ if and only if $\delta _{1}=0$.
Let $i\in \{1,\ldots ,m(k)\}$. If $i$ is not a node of the $k$th layer, then
$i\in \nu _{k}(\bar{\delta})$ if and only if $\delta _{i}=1$ and $\delta
_{l(i)}=\delta _{p(i)}=0$. If $i$ is a node of the $k$th layer, then $i\in
\nu _{k}(\bar{\delta})$ if and only if $\delta _{i}=1$.

Let $T_{k}$ be a decision table from $\mathcal{M}_{2}^{\infty }$ with $m(k)$
columns labeled with the attributes $f_{1},\ldots ,f_{m(k)}$ and $2^{m(k)}$
pairwise different rows filled with numbers from $E_{2}$. Any row $(\delta
_{1},\ldots ,$ $\delta _{m(k)})$ of the table $T_{k}$ is labeled with the set
of decisions $\nu _{k}(\delta _{1},\ldots ,\delta _{m(k)})$. It is clear
that $T_{k}$ is a complete decision table.

\begin{lemma}
\label{7M4} For any $k\in \omega \setminus \{0\}$, $h^{a}(T_{k})\leq 3$.
\end{lemma}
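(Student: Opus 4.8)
The plan is to reduce the claim, via Lemma \ref{7M3}, to an estimate of $M(T_k,\bar\delta)$ for the individual rows $\bar\delta$ of $T_k$. First I would dispose of the trivial case: if $T_k\in\mathcal{M}_2^{\infty c}$ then $h^a(T_k)=0$ and there is nothing to prove. Otherwise $T_k\in\mathcal{M}_2^{\infty}\setminus\mathcal{M}_2^{\infty c}$, which is easy to see directly, since the all-zero row of $T_k$ carries the decision set $\{0\}$ while the all-one row carries the set of numbers of the nodes of the $k$-th layer, and these two sets are disjoint. Hence Lemma \ref{7M3} applies, and since $T_k$ is complete we have $\Delta(T_k)=E_2^{m(k)}$, so it suffices to prove $M(T_k,\bar\delta)\le 3$ for every $\bar\delta=(\delta_1,\dots,\delta_{m(k)})\in E_2^{m(k)}$; that is, for each $\bar\delta$ I must exhibit at most three attributes $f_{i_1},\dots,f_{i_m}$ with $m\le 3$ such that the subtable $T_k(f_{i_1},\delta_{i_1})\cdots(f_{i_m},\delta_{i_m})$ has a common decision.

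The heart of the argument is a descent through the graph $G_k$ driven by $\bar\delta$. If $\delta_1=0$, I fix the single attribute $f_1$ at value $0$: every row $\bar\sigma$ of the resulting subtable has $\sigma_1=0$, hence $0\in\nu_k(\bar\sigma)$, so $0$ is a common decision and $M(T_k,\bar\delta)\le 1$. If $\delta_1=1$, I build a path $1=i_0,i_1,i_2,\dots$ in $G_k$ as follows: having reached a node $i_t$ with $\delta_{i_t}=1$ that lies in a layer below the $k$-th, if $\delta_{l(i_t)}=\delta_{p(i_t)}=0$ I stop, and otherwise at least one child of $i_t$ has $\delta$-value $1$, and I take $i_{t+1}$ to be such a child. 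Each step moves one layer down, so the process terminates, and it terminates in one of two ways: either at a node $i^{*}$ below the $k$-th layer with $\delta_{i^{*}}=1$ and $\delta_{l(i^{*})}=\delta_{p(i^{*})}=0$, or at a node $i^{*}$ of the $k$-th layer with $\delta_{i^{*}}=1$.

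In the first termination case I fix the three pairwise distinct attributes $f_{i^{*}},f_{l(i^{*})},f_{p(i^{*})}$ at values $1,0,0$; every row $\bar\sigma$ of the corresponding subtable satisfies $\sigma_{i^{*}}=1$ and $\sigma_{l(i^{*})}=\sigma_{p(i^{*})}=0$, which by the definition of $\nu_k$ for a node outside the $k$-th layer means precisely that $i^{*}\in\nu_k(\bar\sigma)$; thus $i^{*}$ is a common decision and $M(T_k,\bar\delta)\le 3$. In the second case I fix the single attribute $f_{i^{*}}$ at value $1$; every row $\bar\sigma$ of the subtable has $\sigma_{i^{*}}=1$ with $i^{*}$ in the $k$-th layer, so again $i^{*}\in\nu_k(\bar\sigma)$ and $M(T_k,\bar\delta)\le 1$. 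In all cases $M(T_k,\bar\delta)\le 3$, whence $h^a(T_k)=\max\{M(T_k,\bar\delta):\bar\delta\in E_2^{m(k)}\}\le 3$.

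I do not expect a genuine obstacle here; the argument is essentially bookkeeping. The only points that require care are checking that $T_k\notin\mathcal{M}_2^{\infty c}$ so that Lemma \ref{7M3} is applicable, and verifying that the descent always ends at a legitimate rule, which is exactly what forces the separate treatment of nodes in the last layer versus nodes in earlier layers — and this is precisely why the bound is $3$ rather than $1$. An alternative, purely constructive write-up would instead display the nondeterministic decision tree realizing the rule system consisting of $(f_1,0)\Rightarrow 0$, the rules $(f_i,1)\Rightarrow i$ over all nodes $i$ of the $k$-th layer, and the rules $(f_i,1)(f_{l(i)},0)(f_{p(i)},0)\Rightarrow i$ over all nodes $i$ outside the $k$-th layer, observe that each complete path has length at most $3$, and use the same descent to see that every row of $T_k$ is covered by one of these paths.
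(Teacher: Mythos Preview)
Your proof is correct and follows essentially the same approach as the paper: both argue by a descent through $G_k$ starting from node $1$, showing $M(T_k,\bar\delta)\le 3$ for every $\bar\delta$, and then invoke Lemma~\ref{7M3}. Your write-up is in fact slightly more careful than the paper's in that you explicitly verify $T_k\notin\mathcal{M}_2^{\infty c}$ before applying Lemma~\ref{7M3}, whereas the paper leaves this implicit.
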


\begin{proof}
Let $\bar{\delta}=(\delta _{1},\ldots ,\delta _{m(k)})\in E_{2}^{m(k)}$. If $%
\delta _{1}=0$, then $0\in \Pi (T_{k}(f_{1},0))$, $T_{k}(f_{1},0)\in
\mathcal{M}_{2}^{\infty c}$, and $M(T_{k},\bar{\delta})\leq 1$.
Let $\delta _{1}=1$. If the node $1$ belongs to the $k$th layer, then $1\in \Pi (T_{k}(f_{1},1))$, $%
T_{k}(f_{1},1)\in \mathcal{M}_{2}^{\infty c}$, and $M(T_{k},\bar{%
\delta})\leq 1$. Let the node $1$ do not belong to the $k$th layer. If $%
\delta _{l(1)}=0$ and $%
\delta _{p(1)}=0$, then $1\in \Pi
(T_{k}(f_{1},1)(f_{l(1)},0)(f_{p(1)},0))$, $%
T_{k}(f_{1},1)(f_{l(1)},0)(f_{p(1)},0)\in \mathcal{M}_{2}^{\infty c}$, and $M(T_{k},\bar{\delta})\leq 3$. Let $1\in \{\delta _{l(1)},\delta
_{p(1)}\}$ and, for definiteness, $\delta _{l(1)}=1$. Then, instead of the
number $1$, we consider the number $l(1)$, etc. As a result, we either find $%
i\in \{1,\ldots ,m(k)\}$ such that the node $i$ does not belong to the $k$th
layer of the graph $G_{k}$ and for which $\delta _{i}=1$ and $\delta
_{l(i)}=\delta _{p(i)}=0$, or we find $i\in \{1,\ldots ,m(k)\}$ such that
the node $i$ belongs to the $k$th layer of the graph $G_{k}$ and for which $%
\delta _{i}=1$. In both cases, $M(T_{k},\bar{\delta})\leq 3$. Thus, $M(T_{k},%
\bar{\delta})\leq 3$ for any $\bar{\delta}\in E_{2}^{m(k)}$ and $M(T)\leq 3.$
Using Lemma \ref{7M3}, we obtain that $h^{a}(T_{k})\leq 3$.
\end{proof}

A \emph{complete path} in the graph $G_{k}$ is a directed path from the node
of the $1$st layer to a node of the $k$th layer. We correspond to a complete
path $\pi $ in $G_{k}$ its \emph{characteristic tuple} $\delta (\pi
)=(\delta _{1},\ldots ,\delta _{m(k)})\in E_{2}^{m(k)}$, where, for $%
i=1,\ldots ,m(k)$, $\delta _{i}=1$ if and only if $\pi $ passes through the
node $i$ of $G_{k}$. Denote by $Path_{k}$ the set of complete paths in the
graph $G_{k}$. Evidently $\left\vert Path_{k}\right\vert =2^{k-1}$. Denote by $%
T_{k}^{\ast }$ a subtable of the table $T_{k}$, which contains only rows that
are characteristic tuples of paths from $Path_{k}$. It is clear that each row
from $T_{k}^{\ast }$ is labeled with a singleton set in which the only
element is the number of the last node in the path corresponding to the
considered row. Let $T$ be a subtable of the table $T_{k}^{\ast }$ and $\bar{%
\delta}_{1},\ldots ,\bar{\delta}_{t}$ be all rows of $T$. Denote $%
r(T)=\left\vert \{\nu _{k}(\bar{\delta}_{1}),\ldots ,\nu _{k}(\bar{\delta}%
_{t})\}\right\vert $. Evidently, $T\in \mathcal{M}_{2}^{\infty c}$
if and only if $r(T)\leq 1$.

\begin{lemma}
\label{7M5} Let $T=T_{k}^{\ast}(f_{i_{1}},0)\cdots $ $(f_{i_{n}},0)$, where $k\in \omega \setminus \{0\}$ and $f_{i_{1}},\ldots,f_{i_{n}}\in \{f_{1},\ldots ,f_{m(k)}\}$. If $T\neq \Lambda $, then $r(T)\geq
\max (1,k-n)$.
\end{lemma}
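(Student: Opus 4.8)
The plan is to recast the statement as a fact about reachability of the bottom layer of the graph $G_k$ after deleting nodes, and then to argue layer by layer.

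First I would pass to graph language. Write $R=\{i_1,\ldots,i_n\}$ for the set of deleted nodes. By the definition of $\nu_k$ and of $T_k^{\ast}$, the rows of $T=T_k^{\ast}(f_{i_1},0)\cdots(f_{i_n},0)$ are precisely the characteristic tuples $\delta(\pi)$ of those complete paths $\pi$ of $G_k$ that avoid every node in $R$, and for such a path $\nu_k(\delta(\pi))$ is the singleton whose element is the last node of $\pi$. Hence $r(T)$ equals the number of distinct nodes of the $k$th layer that arise as the endpoint of a complete path avoiding $R$; equivalently, the number of nodes of the $k$th layer reachable from the node of the $1$st layer in the graph $G_k$ with the nodes of $R$ removed. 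Since $T\neq\Lambda$, at least one complete path avoids $R$, so $r(T)\geq 1$, and it remains to show $r(T)\geq k-n$.

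For this, for $j=1,\ldots,k$ let $S_j$ be the set of nodes of the $j$th layer reachable from the root of $G_k$ once the nodes of $R$ are removed. Because $T\neq\Lambda$, the root is not in $R$, and since reachability only propagates downward and a node of the $k$th layer is reached, every $S_j$ is nonempty. The structural point, to be read off from the numbering of $G_k$ in Figure~\ref{7fig5}, is that the nodes of a layer form an interval system: if a node of layer $j$ is not the rightmost one, its right child coincides with the left child of the next node of layer $j$. Consequently, if $X$ is a nonempty set of nodes of layer $j$ formed of $c$ maximal runs of consecutive nodes, then the set of all children of the nodes of $X$ is a set of nodes of layer $j+1$ of size exactly $|X|+c$, hence of size at least $|X|+1$.

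Finally I would combine these observations. Let $R_j$ be the set of nodes of $R$ lying in the $j$th layer. Every node of $S_{j+1}$ is a child of a node of $S_j$ and does not lie in $R_{j+1}$, so by the expansion property $|S_{j+1}|\geq(|S_j|+1)-|R_{j+1}|$ for $j=1,\ldots,k-1$. As $|S_1|=1$ and $R_1=\emptyset$, induction on $j$ gives $|S_j|\geq j-\sum_{t=1}^{j}|R_t|$; taking $j=k$ yields $r(T)=|S_k|\geq k-|R|\geq k-n$. Together with $r(T)\geq 1$ this gives $r(T)\geq\max(1,k-n)$. The main thing requiring care is the verification of the interval-system expansion property directly from the explicit numbering of $G_k$, together with the bookkeeping guaranteeing that each $S_j$ stays nonempty so that the bound $|S_{j+1}|\geq|S_j|+1-|R_{j+1}|$ may be invoked at every step.
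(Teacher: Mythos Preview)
Your proposal is correct and follows essentially the same strategy as the paper's own proof. Both arguments hinge on the same structural fact: passing from a nonempty set of nodes in one layer of $G_k$ to the set of all their children in the next layer increases the size by at least one (the paper phrases this as ``$l(j_1)<p(j_1)<\cdots<p(j_s)$ are elements \ldots'' to extract $s+1$ distinct children from $s$ reachable nodes). The only organisational difference is that the paper packages this as an induction on $k$, invoking the inductive hypothesis for $T_{k-1}^{\ast}$ with the constraints lying in the first $k-1$ layers and then performing a single expansion step, whereas you unroll the same induction as an explicit layer-by-layer recursion $|S_{j+1}|\ge|S_j|+1-|R_{j+1}|$ inside a fixed $G_k$; the two are equivalent because the top $k-1$ layers of $G_k$ form a copy of $G_{k-1}$.
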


\begin{proof}
We will prove the considered statement by induction on $k$. Let $k=1$. Then,
evidently, the considered statement holds. Let it hold for $k-1$ for some $%
k\geq 2$. Let $f_{i_{1}},\ldots ,f_{i_{n}}\in \{f_{1},\ldots ,f_{m(k)}\}$,
$T=T_{k}^{\ast }(f_{i_{1}},0)\cdots (f_{i_{n}},0)$ and $T\neq \Lambda $%
. Without loss of generality we assume that $i_{1},\ldots ,i_{t}\leq m(k-1)$
and $i_{t+1},\ldots ,i_{n}>m(k-1)$. Since $T\neq \Lambda $, $T_{k-1}^{\ast
}(f_{i_{1}},0)\cdots (f_{i_{t}},0)\neq \Lambda $. By induction hypothesis, $%
r(T_{k-1}^{\ast }(f_{i_{1}},0)\cdots (f_{i_{t}},0))\geq \max (1,k-1-t))$.
Let $j_{1}<\cdots <j_{s}$ be all elements of sets attached to rows of the
table $T_{k-1}^{\ast }(f_{i_{1}},0)\cdots (f_{i_{t}},0)$. Then, evidently, $%
l(j_{1})<p(j_{1})<\cdots <p(j_{s})$ are elements of sets attached to rows of
the table $T_{k}^{\ast }(f_{i_{1}},0)\cdots (f_{i_{t}},0)$. Let us show that
$r(T)\geq \max (1,k-n)$. If $n\geq k$, then this inequality follows from the
fact that $T\neq \Lambda $. Let $n<k$. Then the following two cases are
possible: (i) $t=k-1$ and (ii) $t<k-1$. In the first case, the considered
inequality follows from the fact that $T\neq \Lambda $. In the second case, $%
r(T_{k-1}^{\ast }(f_{i_{1}},0)\cdots (f_{i_{t}},0))\geq k-t-1$ and, as it
was shown earlier, $r(T_{k}^{\ast }(f_{i_{1}},0)\cdots (f_{i_{t}},0))\geq
k-t $. One can show that $r(T)\geq r(T_{k}^{\ast }(f_{i_{1}},0)\cdots
(f_{i_{t}},0))-(n-t)\geq k-t-n+t=k-n$.
\end{proof}

\begin{lemma}
\label{7M6} For any $k\in \omega \setminus \{0\}$, $h^{d}(T_{k})\geq k-1$.
\end{lemma}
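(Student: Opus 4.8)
The plan is to reduce the statement to the inequality $h^{d}(T_{k}^{\ast })\ge k-1$ and then to extract from a deterministic decision tree a root‑to‑leaf path that, after discarding redundant steps, is a pure ``answer $0$'' path, so that Lemma \ref{7M5} applies.

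First I would observe that every deterministic decision tree $\Gamma $ for $T_{k}$ is automatically a deterministic decision tree for the subtable $T_{k}^{\ast }$: the rows of $T_{k}^{\ast }$ are rows of $T_{k}$, for a complete path $\tau $ of $\Gamma $ the table $T_{k}^{\ast }(\tau )$ is obtained from $T_{k}(\tau )$ by deleting rows, and a common decision of $T_{k}(\tau )$ is a common decision of $T_{k}^{\ast }(\tau )$ whenever the latter is nonempty; the structural conditions on $\Gamma $ carry over verbatim. Hence $h^{d}(T_{k})\ge h^{d}(T_{k}^{\ast })$, and it suffices to show $h^{d}(T_{k}^{\ast })\ge k-1$.

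Fix now a deterministic decision tree $\Gamma $ for $T_{k}^{\ast }$ with $h(\Gamma )=h^{d}(T_{k}^{\ast })$, and build a complete path $\tau $ of $\Gamma $ as follows, thinking of rows of $T_{k}^{\ast }$ as complete paths of $G_{k}$ and of a query ``$f_{i}$?'' as asking whether the unknown complete path of $G_{k}$ runs through the node carrying $f_{i}$. From the root take the unique edge; at a node with attribute $f_{i}$, follow the edge labelled $0$ provided at least one row of $T_{k}^{\ast }$ consistent with the answers read so far still has $0$ in column $f_{i}$, and otherwise follow the edge labelled $1$. In either case the chosen edge exists, because the surviving rows do not ``fall off'' (each row of $T_{k}^{\ast }$ lies on some complete path of $\Gamma $) and must be routed further down. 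Let $w_{1},\dots ,w_{m_{0}}$ be, in order, the attributes at which we answered $0$ and $v_{1},\dots ,v_{m_{1}}$ those at which we answered $1$, so $\tau $ has length $m_{0}+m_{1}\le h(\Gamma )$, and by construction at each stage at least one consistent row survives.

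The crucial point is that the $1$‑answers are redundant: whenever we are forced to answer $1$ at $v$, every still‑surviving row already carries $1$ in column $v$ — that is precisely why answering $0$ was impossible — so restricting by $(v,1)$ does not change the set of surviving rows. By induction on the queries, the set of rows surviving along all of $\tau $ is exactly the set of rows of $T=T_{k}^{\ast }(w_{1},0)\cdots (w_{m_{0}},0)$, and this set is nonempty; moreover $T_{k}^{\ast }(\tau )=T$. Since $\Gamma $ is a deterministic decision tree for $T_{k}^{\ast }$ and $T\neq \Lambda $, the decision at the terminal of $\tau $ is common for $T$, so $T\in \mathcal{M}_{2}^{\infty c}$; as every row of $T_{k}^{\ast }$ is labelled with a singleton set of decisions, all rows of $T$ then carry the same singleton, i.e. $r(T)=1$. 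By Lemma \ref{7M5}, $1=r(T)\ge \max (1,k-m_{0})$, hence $m_{0}\ge k-1$ and therefore $h^{d}(T_{k}^{\ast })=h(\Gamma )\ge m_{0}\ge k-1$, which together with the first step gives $h^{d}(T_{k})\ge k-1$. I expect the only genuinely delicate point to be the bookkeeping in the previous paragraph — checking that the surviving rows along $\tau $ really coincide with an all‑zero restriction of $T_{k}^{\ast }$ — after which Lemma \ref{7M5} and the singleton‑decision observation close the argument at once.
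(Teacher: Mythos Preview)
Your proof is correct and follows essentially the same approach as the paper: reduce to $T_{k}^{\ast }$, produce a complete path of an optimal tree whose associated subtable is an all-zero restriction with $r=1$, and invoke Lemma~\ref{7M5}. The only difference is that the paper first asserts (without proof) a normalization of $\Gamma$ in which every internal node is binary and every complete path yields a nonempty subtable, and then simply takes the all-$0$ path, whereas you sidestep that normalization via an adversary walk and the observation that forced $1$-answers do not shrink the set of surviving rows.
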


\begin{proof}
If $k=1$, then the considered inequality holds. Let $k>1$. We now show that $%
h^{d}(T_{k}^{\ast })\geq k-1$. One can prove that there exists a
deterministic decision tree for the table $T_{k}^{\ast }$ such that
\begin{itemize}
\item $h(\Gamma )=h^{d}(T_{k}^{\ast })$.

\item For each node of $\Gamma $, which is neither the root nor a terminal
node, there are exactly two edges leaving this node.

\item For each complete path $\tau $ of $\Gamma $, $T_{k}^{\ast }(\tau )\neq
\Lambda $.
\end{itemize}

Let $\tau _{0}$ be a complete path in $\Gamma $ such that, for each node of $%
\tau _{0}$, which is neither the root nor a terminal node, the edge leaving
this node is labeled with the number $0$. Evidently, $r(T_{k}^{\ast }(\tau
_{0}))=1$. By Lemma \ref{7M5}, the number of nodes in $\tau _{0}$ labeled
with attributes is at least $k-1$. Therefore $h(\Gamma )\geq k-1$ and $%
h^{d}(T_{k}^{\ast })\geq k-1$. It is clear that each deterministic decision
tree for the table $T_{k}$ is a deterministic decision tree for the table $%
T_{k}^{\ast }$. Therefore $h^{d}(T_{k})\geq k-1$.
\end{proof}

The next statement follows immediately from Theorem 4.6 from \cite{Moshkov05}.

\begin{lemma}
\label{7M7} For any table $T\in \mathcal{M}%
_{2}^{\infty }$, if $T\neq \Lambda $, then $N(T)\leq (4W(T))^{Z(T)}$.
\end{lemma}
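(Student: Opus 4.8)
The plan is to recognize $Z(T)$ as a combinatorial‐dimension parameter of the set of rows of $T$ and then invoke the Sauer–Shelah–type bound of Theorem 4.6 from \cite{Moshkov05}. First I would make the following observation. For a table $T\in\mathcal{M}_{2}^{\infty}$, the closure $[T]$ consists of tables $J(\nu,I(D,T))$, and changing decisions affects neither rows nor columns; hence a complete table in $[T]$ has exactly the rows and columns of some $I(D,T)$. Moreover $I(D,T)$ is complete precisely when the projections of the rows of $T$ onto the column set $S=\operatorname{At}(T)\setminus D$ exhaust all of $E_{2}^{S}$, i.e., when $S$ is \emph{shattered} by the rows of $T$. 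Consequently $Z(T)$ equals the maximum size of a shattered set of columns, i.e., the VC dimension of the rows of $T$ viewed as a subset of the Boolean cube $E_{2}^{W(T)}$. This is the parameter that Theorem 4.6 from \cite{Moshkov05} controls.

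Next I would dispose of the degenerate case $Z(T)=0$. If $T\neq\Lambda$, then $W(T)\ge 1$ (removing all columns yields $\Lambda$), and $Z(T)=0$ means no single column takes both values over the rows of $T$, so every column is constant on the rows. Since the rows are pairwise distinct tuples over these columns, there can be at most one row, so $N(T)=1=(4W(T))^{0}=(4W(T))^{Z(T)}$, as required.

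For the main case $Z(T)\ge 1$, I would apply Theorem 4.6 from \cite{Moshkov05} to get the Sauer–Shelah estimate
\begin{equation*}
N(T)\ \le\ \sum_{i=0}^{Z(T)}\binom{W(T)}{i},
\end{equation*}
using that the rows of $T$ form a subset of $E_{2}^{W(T)}$ of VC dimension $Z(T)$ (and $Z(T)\le W(T)$ automatically). It then remains to bound the right–hand side by $(4W(T))^{Z(T)}$. Writing $d=Z(T)\ge 1$ and $W=W(T)\ge d\ge 1$, each of the $d+1$ summands satisfies $\binom{W}{i}\le W^{i}\le W^{d}$, so the sum is at most $(d+1)W^{d}$, and since $d+1\le 4^{d}$ for all $d\ge 1$ we obtain $N(T)\le (d+1)W^{d}\le (4W)^{d}=(4W(T))^{Z(T)}$.

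The only real obstacle is identifying the precise formulation of Theorem 4.6 in \cite{Moshkov05} and checking that the identification of $Z(T)$ with the quantity appearing there is exact (in particular that the "complete subtable" language of the present paper matches the "shattering/dimension" language of the cited book); once that correspondence is in place, the remaining steps are the elementary case split and arithmetic above.
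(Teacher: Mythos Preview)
Your proposal and the paper's treatment are essentially aligned: both rest on Theorem~4.6 of \cite{Moshkov05}. The paper, however, does not argue at all---it simply records that the inequality ``follows immediately'' from that theorem, so in particular it provides no Sauer--Shelah reformulation, no separate $Z(T)=0$ case, and no arithmetic passage from a binomial sum to $(4W(T))^{Z(T)}$. Your elaboration (identifying $Z(T)$ with the VC dimension of the row set, invoking Sauer--Shelah, and bounding $\sum_{i\le d}\binom{W}{i}\le (d{+}1)W^{d}\le(4W)^{d}$) is internally correct and would yield the lemma if Theorem~4.6 delivers only the binomial-sum bound; but it may well be redundant, since the ``follows immediately'' phrasing suggests Theorem~4.6 already states the inequality in the $(4W(T))^{Z(T)}$ form (or in directly equivalent notation). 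Your own caveat is therefore exactly the crux: the whole proof reduces to checking that the parameters in Theorem~4.6 coincide with $N(T)$, $W(T)$, and $Z(T)$ as defined here, after which no further argument is expected.
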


\begin{lemma}
\label{7M8} Let $A$ be a nontrivial closed class of decision tables from $\mathcal{M}%
_{2}^{\infty }$ and $\psi $ be a bounded complexity measure. If the function
$Z_{\psi ,A}$ is not everywhere defined, then the function $\mathcal{H}^{\infty}_{\psi ,A}$ is
not everywhere defined.
\end{lemma}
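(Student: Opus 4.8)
The plan is to reuse the complete decision tables $T_{k}$ built just before Lemma~\ref{7M4}: these have bounded nondeterministic depth, $h^{a}(T_{k})\le 3$ by Lemma~\ref{7M4}, but unbounded deterministic depth, $h^{d}(T_{k})\ge k-1$ by Lemma~\ref{7M6}. The idea is that failure of $Z_{\psi ,A}$ to be everywhere defined produces, for a single value of the parameter $m_{\psi}$, complete tables in $A$ with arbitrarily many columns; inside the closure of such a table we can plant a copy of $T_{k}$ for $k$ as large as we like, and then translate the $h$-bounds into $\psi$-bounds using boundedness from below and from above.

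First I would unpack the hypothesis. If $Z_{\psi ,A}$ is not everywhere defined, then there is an $n\in\omega$ for which the set $\{Z(T):T\in A_{\psi }(n)\}$ is infinite, so we may pick tables $T_{0},T_{1},\ldots$ in $A$ with $m_{\psi }(T_{i})\le n$ and $Z(T_{0})<Z(T_{1})<\cdots$. By the definition of $Z$, each $T_{i}$ has a complete subtable-closure member $Q_{i}\in[T_{i}]\subseteq A$ with $c_{i}:=|\operatorname{At}(Q_{i})|=Z(T_{i})$. Since removal of columns and changing of decisions cannot create new attributes, $\operatorname{At}(Q_{i})\subseteq\operatorname{At}(T_{i})$, hence $m_{\psi }(Q_{i})\le m_{\psi }(T_{i})\le n$; and $c_{i}\to\infty$.

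Next comes the embedding step. For each $i$ choose the largest $k_{i}$ with $m(k_{i})=\frac{k_{i}(k_{i}+1)}{2}\le c_{i}$; then $k_{i}\to\infty$. Remove from $Q_{i}$ all but $m(k_{i})$ of its columns: since $Q_{i}$ is complete, the resulting table $I(D,Q_{i})$ is again a complete $2$-table, now with exactly $m(k_{i})$ columns. Fixing an ordering of the surviving attributes and applying the change of decisions $\nu _{k_{i}}$ yields a table $T'_{k_{i}}\in[Q_{i}]\subseteq A$ that differs from $T_{k_{i}}$ only in the names of its attributes; since $h^{a}$ and $h^{d}$ depend only on the table up to such renaming, Lemmas~\ref{7M4} and \ref{7M6} give $h^{a}(T'_{k_{i}})\le 3$ and $h^{d}(T'_{k_{i}})\ge k_{i}-1$, and $\operatorname{At}(T'_{k_{i}})\subseteq\operatorname{At}(Q_{i})$ gives $m_{\psi }(T'_{k_{i}})\le n$. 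Now I translate: taking a nondeterministic decision tree $\Gamma$ for $T'_{k_{i}}$ with $h(\Gamma )\le 3$, every word $\pi(\tau)$ has at most three letters, so by commutativity and boundedness from above $\psi(\pi(\tau))\le 3\,m_{\psi }(T'_{k_{i}})\le 3n$, whence $\psi^{a}(T'_{k_{i}})\le 3n$; while boundedness from below gives $\psi^{d}(T'_{k_{i}})\ge h^{d}(T'_{k_{i}})\ge k_{i}-1$. Thus $\{\psi^{d}(T):T\in A,\ \psi^{a}(T)\le 3n\}$ contains the unbounded set $\{\psi^{d}(T'_{k_{i}}):i\in\omega\}$, so it is infinite, $\mathcal{H}^{\infty}_{\psi ,A}(3n)$ is undefined, and $\mathcal{H}^{\infty}_{\psi ,A}$ is not everywhere defined.

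The step I expect to require the most care is the embedding: verifying that deleting columns from the complete table $Q_{i}$ again yields a complete table with the prescribed number of columns, that relabelling its decisions by $\nu _{k_{i}}$ keeps it inside $[Q_{i}]$, and the routine but essential bookkeeping that a renaming of attributes preserves the values $h^{a}$ and $h^{d}$ (so that Lemmas~\ref{7M4} and \ref{7M6} apply to $T'_{k_{i}}$). Everything else is a direct application of the complexity-measure axioms and the definition of $\mathcal{H}^{\infty}_{\psi ,A}$.
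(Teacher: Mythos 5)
Your proposal is correct and follows essentially the same route as the paper's proof: extract from the failure of $Z_{\psi,A}$ complete tables in $A_{\psi}(n)$ with arbitrarily many columns, plant a copy of $T_{k}$ via column removal and the change of decisions $\nu_{k}$, and then use Lemmas \ref{7M4} and \ref{7M6} together with boundedness from above and below to get $\psi^{a}\leq 3n$ and unbounded $\psi^{d}$, so that $\mathcal{H}^{\infty}_{\psi,A}(3n)$ is undefined. Your extra bookkeeping about attribute renaming and completeness after column removal only spells out what the paper leaves implicit.
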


\begin{proof}
Let, for $n\in \omega $, the value $Z_{\psi ,A}(n)$ be not defined and $k\in
\omega \setminus \{0\}$. Then there exists a table $T\in $ $A_{\psi }(n)$ such that $%
Z(T)\geq m(k)$. It is easy to see that the set $[T]$ contains a complete
table $T^{\prime }$ with $m(k)$ columns. It is clear that $J(\nu
_{k},T^{\prime })=T_{k}$, i.e., $T_{k}\in A_{\psi }(n)$. Using Lemma \ref{7M4}%
, and boundedness from above property of the function $\psi $, we obtain
that $\psi ^{a}(T_{k})\leq 3n$. Using Lemma \ref{7M6}, and boundedness from
below property of the function $\psi $, we obtain that $\psi ^{d}(T_{k})\geq
k-1$. Since $k$ is an arbitrary number from $\omega \setminus \{0\}$, we obtain that the
value $\mathcal{H}^{\infty}_{\psi ,A}(3n)$ is not defined.
\end{proof}

\begin{lemma}
\label{7M9} Let $A$ be a nontrivial closed class of decision tables from $\mathcal{M}%
_{2}^{\infty }$ and $\psi $ be a bounded complexity measure. If the function
$\mathcal{H}^{\infty}_{\psi ,A}$ is everywhere defined, then, for any $n\in \omega $, $\mathcal{H}^{\infty}_{\psi
,A}(3n)\geq \sqrt{2Z_{\psi ,A}(n)}-3$.
\end{lemma}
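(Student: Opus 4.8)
The plan is to relate $Z_{\psi,A}(n)$ to the construction already used in Lemma \ref{7M8}. The key point is that if $Z_{\psi,A}(n)$ is large, then $A_{\psi}(n)$ contains a complete decision table with many columns, and by applying the mapping $\nu_k$ we land exactly on the table $T_k$ from Lemmas \ref{7M4} and \ref{7M6}. So first I would fix $n \in \omega$, and choose $k \in \omega \setminus \{0\}$ to be the largest integer with $m(k) = \frac{k(k+1)}{2} \leq Z_{\psi,A}(n)$. Then there is a table $T \in A_{\psi}(n)$ with $Z(T) \geq m(k)$, hence $[T]$ contains a complete table $T'$ with exactly $m(k)$ columns, and $J(\nu_k, T') = T_k$, so $T_k \in A_{\psi}(n)$.

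Next I would run exactly the two complexity estimates from the proof of Lemma \ref{7M8}. Since $m_{\psi}(T_k) \leq n$ (as $T_k \in A_{\psi}(n)$), Lemma \ref{7M4} gives $h^a(T_k) \leq 3$, and then the boundedness-from-above property of $\psi$ yields $\psi^a(T_k) \leq 3n$ — each of the at most $3$ attributes on a complete path contributes at most $n$. By Lemma \ref{7M6}, $h^d(T_k) \geq k-1$, and the boundedness-from-below property of $\psi$ gives $\psi^d(T_k) \geq k-1$. Since $T_k \in A$ and $\psi^a(T_k) \leq 3n$, by definition of $\mathcal{H}^{\infty}_{\psi,A}$ (which is everywhere defined by hypothesis) we get $\mathcal{H}^{\infty}_{\psi,A}(3n) \geq \psi^d(T_k) \geq k-1$.

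Finally I would convert the bound $k-1$ into the claimed $\sqrt{2 Z_{\psi,A}(n)} - 3$ by elementary arithmetic on the choice of $k$. By maximality of $k$ we have $m(k+1) = \frac{(k+1)(k+2)}{2} > Z_{\psi,A}(n)$, so $(k+1)(k+2) > 2 Z_{\psi,A}(n)$, hence $(k+2)^2 > 2 Z_{\psi,A}(n)$, giving $k + 2 > \sqrt{2 Z_{\psi,A}(n)}$ and therefore $k - 1 > \sqrt{2 Z_{\psi,A}(n)} - 3$. Combining, $\mathcal{H}^{\infty}_{\psi,A}(3n) \geq k - 1 > \sqrt{2 Z_{\psi,A}(n)} - 3$, which is the desired inequality. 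One edge case to check at the start: if $Z_{\psi,A}(n) < m(1) = 1$, i.e. $Z_{\psi,A}(n) = 0$, then $\sqrt{2 Z_{\psi,A}(n)} - 3 = -3 < 0 \leq \mathcal{H}^{\infty}_{\psi,A}(3n)$ and the statement is trivial, so we may assume $Z_{\psi,A}(n) \geq 1$ and the choice of $k$ is legitimate. The only mild subtlety — and the part I would be most careful about — is confirming that $Z_{\psi,A}(n)$ being defined (which follows from $\mathcal{H}^{\infty}_{\psi,A}$ being everywhere defined, via the contrapositive of Lemma \ref{7M8}) is what lets us pick a genuine table $T$ witnessing $Z(T) \geq m(k)$; everything else is a repackaging of Lemma \ref{7M8}'s argument with quantitative bookkeeping.
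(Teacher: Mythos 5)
Your proof is correct and follows essentially the same route as the paper: choose the largest $k$ with $\frac{k(k+1)}{2}\leq Z_{\psi ,A}(n)$, obtain $T_{k}\in A_{\psi }(n)$ via the complete table and $\nu _{k}$ as in Lemma \ref{7M8}, apply Lemmas \ref{7M4} and \ref{7M6} with the boundedness properties of $\psi $ to get $\psi ^{a}(T_{k})\leq 3n$ and $\psi ^{d}(T_{k})\geq k-1$, and finish with the same arithmetic $(k+2)^{2}>2Z_{\psi ,A}(n)$. Your explicit handling of the case $Z_{\psi ,A}(n)=0$ and of why $Z_{\psi ,A}(n)$ is defined only makes explicit what the paper leaves implicit.
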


\begin{proof}
If $Z_{\psi ,A}(n)=0$, then, evidently, $\mathcal{H}^{\infty}_{\psi ,A}(3n)\geq \sqrt{Z_{\psi
,A}(n)}-3$. Let $Z_{\psi ,A}(n)>0$ and $k$ be the maximum number from $%
\omega $ such that $\frac{k(k+1)}{2}\leq Z_{\psi ,A}(n)$. Then, $T_{k}\in
A_{\psi }(n)$ and, by Lemmas \ref{7M4} and \ref{7M6}, $\psi ^{a}(T_{k})\leq 3n$
and $\psi ^{d}(T_{k})\geq k-1$. Evidently, $\frac{(k+1)(k+2)}{2}>Z_{\psi
,A}(n)$. Therefore $(k+2)^{2}>2Z_{\psi ,A}(n)$ and $k>\sqrt{2Z_{\psi ,A}(n)}%
-2$. Thus, $\mathcal{H}^{\infty}_{\psi ,A}(3n)\geq \sqrt{2Z_{\psi ,A}(n)}-3$.
\end{proof}

\begin{lemma}
\label{7M10} Let $A$ be a nontrivial closed class of decision tables from $\mathcal{M}%
_{2}^{\infty }$, $\psi $ be a bounded complexity measure, $n\in \omega $, $%
T\in A_{\psi }(n)$ and $G(T)>0$. Then there exists a decision table $T^{\ast
}\in \lbrack T]$ such that $\psi ^{a}(T^{\ast })\leq n$ and $\psi
^{d}(T^{\ast })\geq G(T)-1$.
\end{lemma}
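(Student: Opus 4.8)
===  PROOF PROPOSAL ===

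The plan is to mirror the structure used for $Z_{\psi,A}$ via $T_k$, but now using the notion of an irreducible annihilating word. Let $T\in A_\psi(n)$ with $G(T)>0$, and let $\alpha=(f_{i_1},\delta_1)\cdots(f_{i_g},\delta_g)$ be an irreducible annihilating word for $T$ of maximum length $g=G(T)$; by definition $\alpha$ contains no two letters $(f_i,\delta),(f_i,\sigma)$ with $\delta\neq\sigma$, so all $f_{i_j}$ are distinct. First I would pass to the subtable $T' = I(\operatorname{At}(T)\setminus\{f_{i_1},\dots,f_{i_g}\}, T)$ obtained by keeping only the $g$ relevant columns; since $\alpha$ is irreducible, for each $j$ the word $\alpha$ with the $j$-th letter deleted is not annihilating, which should guarantee $T'$ has at least $g$ rows and that $\alpha$ still annihilates $T'$ while every proper subword is non-annihilating on $T'$. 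Because $\alpha$ is irreducible, removing any single letter $(f_{i_j},\delta_j)$ leaves a non-annihilating word, so there exists a row $\bar r_j$ of $T'$ that agrees with $\alpha$ on all coordinates except possibly the $j$-th; the $\bar r_j$ together with whatever row is necessary give us a ``diagonal-like'' structure.

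The key construction is then to define a new decision assignment $\nu$ on $T'$ (or on the complete table in $[T']$) so that the resulting table $T^\ast=J(\nu,\cdot)\in[T]$ has small nondeterministic complexity and large deterministic complexity. Concretely, I would assign to each row $\bar\delta$ of the relevant table the singleton $\{j\}$ when $\bar\delta$ disagrees with $\alpha$ exactly in coordinate $j$ (i.e. $\delta_{i_j}\neq$ the $j$-th entry of $\alpha$ but agrees elsewhere), and assign $\{0\}$ to all other rows. Then a nondeterministic decision tree can, for any given row, query a single attribute $f_{i_j}$ on which the row differs from $\alpha$ — such an attribute must exist because $\alpha$ is annihilating, so no row can agree with $\alpha$ on every coordinate — and at the resulting subtable the decision is common. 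Thus $\psi^a(T^\ast)\le \max_j\psi(f_{i_j})\le m_\psi(T)\cdot 1\le n$ using that $T\in A_\psi(n)$ and the commutativity/nondecreasing properties of $\psi$; one has to be a little careful that each single query suffices, which is where the ``agrees with $\alpha$ except in one coordinate'' bookkeeping pays off.

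For the deterministic lower bound, the idea is that a deterministic tree for $T^\ast$, when restricted to inputs that agree with $\alpha$ in as many coordinates as possible, is forced to distinguish many of the singleton decisions $\{1\},\dots,\{g\}$, and by the irreducibility of $\alpha$ these singletons are genuinely realized by rows of $T^\ast$. A counting argument on terminal nodes (as in Lemma~\ref{7M1}: at least $g$ terminal nodes hence depth at least $\log_2 g$) is too weak — we need $\psi^d(T^\ast)\ge G(T)-1$, a linear bound. So instead I would argue that along the ``all-agree-with-$\alpha$'' branch of an optimal deterministic tree, the tree cannot stop until it has asked essentially all $g$ attributes $f_{i_j}$: before querying $f_{i_j}$ the current subtable still contains a row realizing $\{j\}$ (the one differing from $\alpha$ only in coordinate $j$) together with rows realizing other decisions, so it is not in $\mathcal{M}_k^{\infty c}$. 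This gives a complete path using at least $g-1$ of the attributes, and boundedness from below of $\psi$ yields $\psi^d(T^\ast)\ge g-1=G(T)-1$.

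The main obstacle I anticipate is the deterministic lower bound: making rigorous the claim that an optimal deterministic decision tree must query $G(T)-1$ of the relevant attributes along a single path. The subtlety is that the tree may query attributes in any order and may branch on value $1$ as well as $0$, so one must choose the right path (presumably the one following the values prescribed by $\alpha$, or the all-zero path as in Lemma~\ref{7M6}) and show that at each step the subtable reached is not in $\mathcal{M}_k^{\infty c}$ because it still ``sees'' at least two distinct singleton decisions among the $\bar r_j$'s. This is exactly the role played by Lemma~\ref{7M5} in the proof of Lemma~\ref{7M6}, so I expect the real work is an analogue of Lemma~\ref{7M5} tailored to irreducible annihilating words rather than to the graph $G_k$; once that combinatorial lemma is in hand, the rest is assembling the pieces as above.
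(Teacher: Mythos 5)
Your overall skeleton matches the paper's: project $T$ onto the columns of a maximum-length irreducible annihilating word $\alpha=(f_{j_1},\sigma_1)\cdots(f_{j_g},\sigma_g)$, relabel decisions, and exploit the $g$ rows that differ from $\bar\sigma=(\sigma_1,\ldots,\sigma_g)$ in exactly one coordinate. Your deterministic lower bound is also essentially sound: following the branch that answers every query with the $\bar\sigma$-value, as long as at most $g-2$ of the attributes have been fixed the current subtable still contains two rows with distinct singleton labels, so the path cannot terminate; this is a direct re-proof of what the paper gets by showing $M(T^{\ast},\bar\sigma)\geq g-1$ and citing $h^{d}(T^{\ast})\geq M(T^{\ast})$ (Theorem 3.1 of the Moshkov reference). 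No analogue of Lemma \ref{7M5} is needed.

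The genuine gap is in your decision relabeling and the resulting nondeterministic upper bound. You assign $\{j\}$ to a row that differs from $\bar\sigma$ in exactly the coordinate $j$, and $\{0\}$ to \emph{all other} rows, in particular to rows that differ from $\bar\sigma$ in two or more coordinates (which in general exist in the projected table, and you cannot delete rows with the closure operations). For such a row your claimed depth-$1$ path fails: if the path fixes $f_{j_i}$ to a value $\delta_i\neq\sigma_i$, the reached subtable contains both the $\{0\}$-labeled row and the row differing from $\bar\sigma$ only in coordinate $i$, which is labeled $\{i\}$, so the intersection of decision sets is empty and no terminal decision is admissible; if instead the path fixes $f_{j_i}$ to $\sigma_i$, the subtable contains the singleton-labeled rows for all other coordinates, and again there is no common decision. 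Hence with your labeling $h^{a}(T^{\ast})\geq 2$ whenever $g\geq 2$ and such a row exists, and you can only conclude $\psi^{a}(T^{\ast})\leq 2n$, not the required $\psi^{a}(T^{\ast})\leq n$. The fix is the paper's labeling: set $\nu(\bar\delta)=\{i:\delta_i\neq\sigma_i\}$ for every $\bar\delta\neq\bar\sigma$ (and $\{0\}$ only for $\bar\sigma$, which is not a row since $\alpha$ annihilates $T$). Then for any row $\bar\delta$ and any $i$ with $\delta_i\neq\sigma_i$, every row of $T^{\ast}(f_{j_i},\delta_i)$ has $i$ in its decision set, so one query suffices, $M(T^{\ast},\bar\delta)\leq 1$, and Lemma \ref{7M3} gives $h^{a}(T^{\ast})\leq 1$, whence $\psi^{a}(T^{\ast})\leq m_{\psi}(T)\leq n$; your lower-bound argument goes through unchanged because the rows differing from $\bar\sigma$ in exactly one coordinate $i$ are still labeled $\{i\}$.
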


\begin{proof}
Let $n\in \omega $, $T\in A_{\psi }(n)$ and $G(T)>0$. Let $\alpha
=(f_{j_{1}},\sigma _{1})\cdots (f_{j_{m}},\sigma _{m})$ be an irreducible
annihilating word for the table $T$, which length is equal to $G(T)$ and in which the attributes $f_{j_{1}},\ldots ,f_{j_{m}}$ are arranged in the same order as in the table $T$. Denote
$Q=\operatorname{At}(T)\setminus \{f_{j_{1}},\ldots ,f_{j_{m}}\}$ and $T^{\prime }=$ $I(Q,T)$%
. Let us define a mapping $\nu :E_{2}^{m}\rightarrow \mathcal{P}(\omega )$.
Let $\bar{\delta}=(\delta _{1},\ldots ,\delta _{m})\in E_{2}^{m}$. Denote $%
\bar{\sigma}=(\sigma _{1},\ldots ,\sigma _{m})$. If $\bar{\delta}=\bar{\sigma%
}$, then $\nu (\bar{\delta})=\{0\}$. Let $\bar{\delta}\neq \bar{\sigma}$.
Then $\nu (\bar{\delta})=\{i: i\in\{1,\ldots ,m\},\delta _{i}\neq \sigma _{i}\}$. Denote $T^{\ast
}=J(\nu ,T^{\prime })$.

Let $\bar{\delta}=(\delta _{1},\ldots ,\delta _{m})\in \Delta(T^{\ast})$. Then there is $i\in\{1, \ldots , m\}$ such that $\delta _{i} \neq \sigma _{i}$. Evidently, $i$ is a common decision for the table $T^{\ast}(f_{j_{i}},\delta_i)$. Therefore $M(T^{\ast},\bar{\delta}) \le 1$. Using Lemma \ref{7M3}, we obtain $h^{a}(T^{\ast })\leq 1$.
Since the word $\alpha$ is irreducible, for each $i\in\{1, \ldots , m\}$, the set $\Delta(T^{\ast})$ contains a row (tuple) that is different from $\bar{\sigma}$ only in the $i$th digit and is labeled with the set of decisions $\{i\}$. Using this fact, it is easy to show that
$M(T^{\ast},\bar{\sigma}) \ge m-1$. Therefore $M(T^{\ast}) \ge m-1$.  According to Theorem 3.1 from \cite{Moshkov05}, $h^{d}(T^{\ast })\geq M(T^{\ast})$. Thus, $h^{d}(T^{\ast })\geq m-1=G(T)-1$. Using boundedness from above and
boundedness from below properties of the function $\psi $, we obtain $\psi
^{a}(T^{\ast })\leq n$ and $\psi ^{d}(T^{\ast })\geq G(T)-1$.
\end{proof}

\begin{lemma}
\label{7M11} Let $A$ be a nontrivial closed class of decision tables from $\mathcal{M}%
_{2}^{\infty }$ and $\psi $ be a bounded complexity measure. If the function
$G_{\psi ,A}$ is not everywhere defined, then the function $\mathcal{H}^{\infty}_{\psi ,A}$ is
not everywhere defined.
\end{lemma}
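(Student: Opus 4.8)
The plan is to mimic the structure of the proof of Lemma \ref{7M8}, replacing the role of the complete tables $T_k$ (which witness unbounded $\psi^d$ while $\psi^a$ stays bounded) by the tables $T^{\ast}$ produced in Lemma \ref{7M10} from long irreducible annihilating words. Suppose $G_{\psi,A}$ is not everywhere defined. Then there is an $n\in\omega$ such that the set $\{G(T):T\in A_\psi(n)\}$ is infinite; since this set consists of nonnegative integers, for every $k\in\omega\setminus\{0\}$ we can pick a table $T_k\in A_\psi(n)$ with $G(T_k)\geq k+1$, so in particular $G(T_k)>0$.

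Next I would apply Lemma \ref{7M10} to each such $T_k$: it yields a decision table $T_k^{\ast}\in[T_k]$ with $\psi^{a}(T_k^{\ast})\leq n$ and $\psi^{d}(T_k^{\ast})\geq G(T_k)-1\geq k$. Here I must note that $T_k^{\ast}\in A$: since $A$ is a closed class and $T_k\in A$, we have $[T_k]\subseteq[A]=A$, hence $T_k^{\ast}\in A$. Thus for each $k$ there is a table in $A$ with $\psi^{a}\leq n$ and $\psi^{d}\geq k$, so the set $\{\psi^{d}(T):T\in A,\,\psi^{a}(T)\leq n\}$ is infinite. By the definition of $\mathcal{H}^{\infty}_{\psi,A}$, this means $\mathcal{H}^{\infty}_{\psi,A}(n)$ is undefined, so $\mathcal{H}^{\infty}_{\psi,A}$ is not everywhere defined, which is exactly the claim.

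The only mild subtlety — the place that plays the role of the ``main obstacle'' — is the bookkeeping around the complexity measure $\psi$ versus the depth $h$: Lemma \ref{7M10} already delivers the bounds directly in terms of $\psi^{a}$ and $\psi^{d}$ (its proof invokes the boundedness-from-above and boundedness-from-below properties of $\psi$ to pass from the depth estimates $h^{a}(T^{\ast})\leq 1$ and $h^{d}(T^{\ast})\geq G(T)-1$ to the $\psi$-estimates), so no extra argument is needed here; one just has to be careful to quote Lemma \ref{7M10} for a table $T_k\in A_\psi(n)$, which requires observing that $G(T_k)>0$ for the chosen tables. Everything else is immediate from the definitions of $A_\psi(n)$, $G_{\psi,A}$, $\mathcal{H}^{\infty}_{\psi,A}$ and the closure property of $A$.
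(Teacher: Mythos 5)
Your proposal is correct and follows essentially the same route as the paper: choose tables in $A_{\psi}(n)$ with unboundedly large positive values of $G$, apply Lemma \ref{7M10} to each to get tables with $\psi^{a}\leq n$ and arbitrarily large $\psi^{d}$, and conclude that $\mathcal{H}^{\infty}_{\psi ,A}(n)$ is undefined. Your explicit remarks that $[T_k]\subseteq A$ by closedness and that the chosen tables have $G(T_k)>0$ are details the paper leaves implicit, but there is no substantive difference.
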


\begin{proof}
Let $n\in \omega $ and the value $G_{\psi ,A}(n)$ be not defined. Then there
exists an infinite sequence $D_{0},D_{1},\ldots $ of decision tables from $%
A_{\psi }(n)$ such that $0<G(D_{0})<G(D_{1})<\cdots \,$. Let $i\in \omega $.
From Lemma \ref{7M10} it follows that there exists a decision table $%
D_{i}^{\ast }\in \lbrack D_{i}]$ such that $\psi ^{a}(D_{i}^{\ast })\leq n$
and $\psi ^{d}(D_{i}^{\ast })\geq G(D_{i})-1$. Therefore the value $\mathcal{H}^{\infty}_{\psi
,A}(n)$ is not defined.
\end{proof}

\begin{lemma}
\label{7M12} Let $A$ be a nontrivial closed class of decision tables from $\mathcal{M}%
_{2}^{\infty }$ and $\psi $ be a bounded complexity measure. If the function
$\mathcal{H}^{\infty}_{\psi ,A}$ is everywhere defined, then, for any $n\in \omega $, $\mathcal{H}^{\infty}_{\psi
,A}(n)\geq G_{\psi ,A}(n)-1$.
\end{lemma}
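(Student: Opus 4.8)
The plan is to leverage Lemma \ref{7M10} in exactly the same way that Lemma \ref{7M9} (via Lemmas \ref{7M4} and \ref{7M6}) exploits the quantity $Z_{\psi,A}$. First I would dispose of the trivial case: if $G_{\psi,A}(n)=0$, then since $\mathcal{H}^{\infty}_{\psi,A}(n)\geq 0$ always holds, the inequality $\mathcal{H}^{\infty}_{\psi,A}(n)\geq G_{\psi,A}(n)-1 = -1$ is immediate (interpreting the bound trivially, or noting $\mathcal{H}^{\infty}_{\psi,A}(n)\geq 0 > -1$). So assume $G_{\psi,A}(n)>0$.

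Next, by the definition of $G_{\psi,A}$ and the fact that $\mathcal{H}^{\infty}_{\psi,A}$ (hence, by Theorem \ref{7T3} together with the preceding lemmas, also $L_{\psi,A}$, $Z_{\psi,A}$, and $G_{\psi,A}$) is everywhere defined, the set $\{G(T):T\in A_\psi(n)\}$ is finite, so there exists a table $T\in A_\psi(n)$ with $G(T)=G_{\psi,A}(n)$. Since $G_{\psi,A}(n)>0$ we have $G(T)>0$, so Lemma \ref{7M10} applies: there is a decision table $T^{\ast}\in[T]$ with $\psi^{a}(T^{\ast})\leq n$ and $\psi^{d}(T^{\ast})\geq G(T)-1 = G_{\psi,A}(n)-1$. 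Because $A$ is a closed class and $T\in A$, we have $[T]\subseteq A$, hence $T^{\ast}\in A$.

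Finally, $T^{\ast}\in A$ with $\psi^{a}(T^{\ast})\leq n$ means $T^{\ast}$ is among the tables over which the maximum defining $\mathcal{H}^{\infty}_{\psi,A}(n)$ is taken (and this maximum exists since the function is everywhere defined), so $\mathcal{H}^{\infty}_{\psi,A}(n)\geq \psi^{d}(T^{\ast})\geq G_{\psi,A}(n)-1$, which is the claim. I expect no real obstacle here: the only points requiring a line of care are checking that $T^{\ast}$ lands back in $A$ (closure under $[\cdot]$, which holds by definition of a closed class) and confirming that $m_\psi(T)\leq n$ for the witness $T\in A_\psi(n)$ is precisely what is needed to invoke Lemma \ref{7M10}. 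The structure is a near-verbatim analogue of the proof of Lemma \ref{7M9}, with Lemma \ref{7M10} playing the role that Lemmas \ref{7M4} and \ref{7M6} play there.
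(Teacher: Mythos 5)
Your proposal is correct and follows essentially the same route as the paper: dispose of the case $G_{\psi,A}(n)=0$, pick a witness $T\in A_{\psi}(n)$ with $G(T)=G_{\psi,A}(n)$, apply Lemma \ref{7M10} to get $T^{\ast}\in[T]\subseteq A$ with $\psi^{a}(T^{\ast})\leq n$ and $\psi^{d}(T^{\ast})\geq G_{\psi,A}(n)-1$, and conclude from the definition of $\mathcal{H}^{\infty}_{\psi,A}$. Your added remarks (that $G_{\psi,A}$ is everywhere defined via Lemma \ref{7M11}, and that closedness of $A$ gives $T^{\ast}\in A$) are just explicit versions of steps the paper leaves implicit.
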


\begin{proof}
Let $n\in \omega $. If $G_{\psi ,A}(n)=0$, then the considered inequality holds. Let $G_{\psi ,A}(n)>0$,  $T\in A_{\psi }(n)$ and $G(T)=G_{\psi ,A}(n)$. From
Lemma \ref{7M10} it follows that there exists a decision table $T^{\ast }\in
\lbrack T]$ such that $\psi ^{a}(T^{\ast })\leq n$ and $\psi ^{d}(T^{\ast
})\geq G(T)-1$. Therefore $\mathcal{H}^{\infty}_{\psi ,A}(n)\geq G_{\psi ,A}(n)-1$.
\end{proof}

\begin{lemma}
\label{7M13} Let $A$ be a nontrivial closed class of decision tables from $\mathcal{M}%
_{2}^{\infty }$, $\psi $ be a bounded complexity measure, the function $%
\mathcal{H}^{\infty}_{\psi ,A}$ be everywhere defined and $n\in \omega \setminus \{0\}$.
Then
$
\mathcal{H}^{\infty}_{\psi ,A}(n)\leq \max (n,nG_{\psi ,A}(n))Z_{\psi ,A}(n)\log _{2}(4nL_{\psi
,A}(n))$.

\end{lemma}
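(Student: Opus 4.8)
The plan is to bound $\psi^d(T)$ for an arbitrary $T\in A$ with $\psi^a(T)\le n$ by the claimed quantity, and then take the maximum over such $T$. Fix such a $T$; we may assume $T\neq\Lambda$. Since $\psi^a(T)\le n$, pick a nondeterministic decision tree $\Gamma$ for $T$ with $\psi(\Gamma)\le n$; the words $\pi(\tau)$ over its complete paths form a $(\psi,n)$-cover of $T$, and passing to an irreducible subcover $U'=\{\alpha_1,\dots,\alpha_t\}$ we get $t\le L_{\psi,A}(n)$. Note every attribute occurring in any $\alpha_i$ has $\psi$-value at most $n$ by commutativity and nondecreasing properties, so $m_\psi(T')\le n$ for the subtable $T'=I(Q,T)$ obtained by deleting all attributes not occurring in $\bigcup_i\alpha_i$; moreover $T\in[T']$-adjacent in the sense that a deterministic tree for $T'$ that first recovers (from the values of the retained attributes) which block of rows we are in yields a deterministic tree for $T$ of the same $\psi$-complexity, because the deleted attributes are irrelevant once $U'$ still covers $T$. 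Hence it suffices to bound $\psi^d(T')$ with $T'\in A_\psi(n)$.

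Next I would reduce to a statement about depth via a weighting trick: every attribute of $T'$ has $\psi$-value $\le n$, so for any word $\beta$ over $\operatorname{At}(T')$ we have $\psi(\beta)\le n\cdot|\beta|$; thus it is enough to bound the minimum \emph{depth} $h^d(T')$ and multiply by $n$ — more precisely, to bound $h^d(T')$ by $\max(1,G_{\psi,A}(n))\,Z_{\psi,A}(n)\log_2(4nL_{\psi,A}(n))$, since then $\psi^d(T')\le n\cdot h^d(T')$ and the claimed inequality follows (the two cases in $\max(n,nG_{\psi,A}(n))$ absorbing whether $G_{\psi,A}(n)=0$ or not). To bound $h^d(T')$ I would invoke the classical depth bound of Lemma \ref{7L2} in the form $h^d(T')\le M(T')\log_2 N(T')$ together with the structural bounds available here: $N(T')\le (4W(T'))^{Z(T')}$ by Lemma \ref{7M7}, and $Z(T')\le Z_{\psi,A}(n)$ since $T'\in A_\psi(n)$. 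For $W(T')$ I would argue $W(T')\le t\cdot\max_i|\alpha_i|$; since each $|\alpha_i|\le\psi(\alpha_i)\le n$ (using boundedness from below of $\psi$) and $t\le L_{\psi,A}(n)$, this gives $W(T')\le nL_{\psi,A}(n)$, so $N(T')\le(4nL_{\psi,A}(n))^{Z_{\psi,A}(n)}$ and $\log_2 N(T')\le Z_{\psi,A}(n)\log_2(4nL_{\psi,A}(n))$.

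Finally I need $M(T')\le\max(1,G_{\psi,A}(n))$: given any row $\bar\delta$, I must exhibit a short word $\beta$ (a subword of the negation-free word $(f_{t_1},\delta_1)\cdots$ along which $\bar\delta$ satisfies its own values) that annihilates enough of $T'$ to leave a table with a common decision. The key observation is that the irreducible annihilating words of $T'$ have length at most $G(T')\le G_{\psi,A}(n)$, and using an irreducible annihilating word as a template one obtains, for each $\bar\delta$, a word of length at most $G(T')$ (or at most $1$ when $T'$ has no annihilating word, i.e.\ when it already has a common decision or is essentially trivial) after which the subtable lies in $\mathcal{M}_2^{\infty c}$; combined with Lemma \ref{7M3} this gives $M(T',\bar\delta)\le\max(1,G(T'))$, hence $M(T')\le\max(1,G_{\psi,A}(n))$. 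Assembling: $h^d(T')\le M(T')\log_2 N(T')\le\max(1,G_{\psi,A}(n))\,Z_{\psi,A}(n)\log_2(4nL_{\psi,A}(n))$, and multiplying by $n$ finishes the bound on $\psi^d(T)$; taking the maximum over all admissible $T$ gives the claim. The main obstacle I anticipate is the last step — making precise, from an irreducible annihilating word, why \emph{every} row $\bar\delta$ admits a word of length at most $G(T')$ driving its subtable into $\mathcal{M}_2^{\infty c}$ — i.e.\ relating $M(T')$ to $G(T')$; this is where the combinatorics of closure under changing decisions and the binary alphabet ($k=2$) is really used, and it is the reason the statement is restricted to $\mathcal{M}_2^{\infty}$.
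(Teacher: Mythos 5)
Your overall skeleton (extract an irreducible $(\psi,n)$-cover from a nondeterministic tree, restrict to the attributes actually used, bound $N$ via Lemma \ref{7M7} and $Z_{\psi,A}$, bound $M$ via annihilating words and $G_{\psi,A}$, then apply Lemma \ref{7L2}) matches the paper, but there are two genuine gaps. First, the reduction ``it suffices to bound $\psi^d(T')$'' for $T'=I(Q,T)$ is not justified: $T'$ keeps the decision sets inherited from the first row of each merged group, and a minimum deterministic tree for $T'$ outputs common decisions of subtables of $T'$, which need not be valid decisions for the rows of $T$ that were merged away; so $\psi^d(T)\le\psi^d(T')$ does not follow, and your phrase about a tree that ``recovers which block of rows we are in'' describes a particular tree (essentially the sequential-check tree of cost about $nL_{\psi,A}(n)$), not the minimum-complexity tree to which the bound $M\log_2N$ applies. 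The paper closes this hole by also changing the decisions: it forms $T^{\ast}=J(\nu,T')$ with $\nu(\bar{\delta})=\{d_i:\bar{\delta}\in\Delta(T'\alpha_i)\}$ (and a fresh decision off $\Delta(T')$), for which every deterministic decision tree for $T^{\ast}$ is automatically a deterministic decision tree for $T$, giving $\psi^d(T)\le\psi^d(T^{\ast})$, while $\psi^a(T^{\ast})\le n$ and $T^{\ast}\in A_{\psi}(n)$ are preserved. This change of decisions is the key missing idea.

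Second, the inequality $M(T')\le\max(1,G_{\psi,A}(n))$ is false as a general claim, and it is exactly the step you flagged as the anticipated obstacle. For $\bar{\delta}\in\Delta(T')$ the word $(f_1,\delta_1)\cdots(f_m,\delta_m)$ is not annihilating, so no argument via irreducible annihilating words applies; for instance, a complete table whose rows are labeled by pairwise distinct singletons has $G=0$ but $M$ equal to the number of columns. The statement's $\max(n,nG_{\psi,A}(n))$ reflects the correct case split: for rows $\bar{\delta}\in\Delta(T^{\ast})$ the paper uses Lemma \ref{7M3} together with $\psi^a(T^{\ast})\le n$ to get $M_{\psi}(T^{\ast},\bar{\delta})\le n$, and only for $\bar{\delta}\notin\Delta(T^{\ast})$ does it reduce the (now genuinely annihilating) word to an irreducible one of length at most $G_{\psi,A}(n)$, yielding $M_{\psi}(T^{\ast},\bar{\delta})\le nG_{\psi,A}(n)$. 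Note also that even after this split, your device of bounding the depth and multiplying by $n$ would give $\max(n^2,nG_{\psi,A}(n))$ in the row branch; to obtain the stated bound you should work directly with $M_{\psi}$ and the $\psi$-version of Lemma \ref{7L2}, as the paper does.
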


\begin{proof}
Let $T\in A$ and $\psi ^{a}(T)\leq n$. We now show that
\begin{equation*}
\psi ^{d}(T)\leq \max (n,nG_{\psi ,A}(n))Z_{\psi ,A}(n)\log _{2}(4nL_{\psi
,A}(n)).
\end{equation*}%
If $T\in \mathcal{M}_{2}^{\infty c}$, then $\psi ^{d}(T)=0$ and
the considered inequality holds. Let $T\notin \mathcal{M}_{2}^{\infty c}$. One can show that there exists a nondeterministic decision
tree $\Gamma $ for the table $T$ such that $\psi (\Gamma )\leq n$ and the
set $U$ of words corresponding to complete path of $\Gamma $ is an
irreducible $(\psi ,n)$-cover of the table $T$. Let $U=\{\alpha _{1},\ldots
,\alpha _{t}\}$ and $d_{1},\ldots ,d_{t}$ be decisions attached to terminal
nodes of complete paths with corresponding words $\alpha _{1},\ldots ,\alpha
_{t}$, respectively. Let, for the definiteness, $\operatorname{At}(\Gamma )=\{f_{1},\ldots ,f_{m}\}$ and attributes $f_{1},\ldots ,f_{m}$ are arranged in the same order as in the table $T$. It is
clear that $t\leq $ $L_{\psi ,A}(n)$. Using boundedness from below property
of the function $\psi $, we obtain $m\leq nL_{\psi ,A}(n)$. Denote $%
T^{\prime }=I(\operatorname{At}(T)\setminus \operatorname{At}(\Gamma ),T)$. One can show that $U$ is an
irreducible $(\psi ,n)$-cover of the table $T^{\prime }$. We now define a mapping $\nu
:E_{2}^{m}\rightarrow \mathcal{P}(\omega )$. Let $d_0 \in \omega \setminus \{d_{1},\ldots ,d_{t}\}$ and
$\bar{\delta}=(\delta
_{1},\ldots ,\delta _{m})\in E_{2}^{m}$. If $\bar{\delta} \notin \Delta (T^{\prime })$, then $\nu (\bar{\delta})=\{d_0\}$. Otherwise, $\nu (\bar{\delta})=\{d_i:\bar{\delta} \in \Delta (T^{\prime }\alpha_i), i \in \{1, \ldots ,t$\}\}.

 Denote $T^{\ast }=J(\nu ,T^{\prime })$.
One can show that any deterministic decision tree for the table $T^{\ast }$
is a deterministic decision tree for the table $T$. Therefore $\psi
^{d}(T)\leq \psi ^{d}(T^{\ast })$. It is clear that $U$ is an
irreducible $(\psi ,n)$-cover of the table $T^{\ast }$. Therefore $\psi
^{a}(T^{\ast })\leq n$. Using commutativity and nondecreasing properties of
the function $\psi $, we obtain that $\psi (f_{j})\leq n$ for any $f_{j}\in
\operatorname{At}(\Gamma )$. Therefore $T^{\ast }\in A_{\psi }(n)$. We shown that $%
W(T^{\ast})=m\leq nL_{\psi ,A}(n)$. Using Lemma \ref{7M7}, we obtain $N(T^{\ast
})\leq (4nL_{\psi ,A}(n))^{Z(T^{\ast})}\leq (4nL_{\psi ,A}(n))^{Z_{\psi ,A}(n)}$.
We now show that $$M_{\psi }(T^{\ast })\leq \max (n,nG_{\psi ,A}(n)).$$ Let $%
\bar{\delta}=(\delta _{1},\ldots ,\delta _{m})\in E_{2}^{m}$. First, we
consider the case when $\bar{\delta}\in \Delta (T^{\ast })$. Using Lemma \ref%
{7M3} and inequality $\psi^{a}(T^{\ast })\leq n$, we obtain $M_{\psi }(T^{\ast },\bar{\delta})\leq n$. Let $\bar{\delta}%
\notin \Delta (T^{\ast })$. Then the word $\alpha =(f_{1},\delta _{1})\cdots
(f_{m},\delta _{m})$ is an annihilating word for the table $T^{\ast }$. By removal
of some letters from the word $\alpha $, we obtain an irreducible
annihilating word $\beta $ for the table $T^{\ast }$. It is clear that the length of
$\beta $ is at most $G(T^{\ast })$ and $G(T^{\ast })\leq G_{\psi ,A}(n)$.
Let $\beta =(f_{j_{1}},\delta _{j_{1}})\cdots (f_{j_{s}},\delta _{j_{s}})$.
Using the fact that $T^{\ast }\in A_{\psi }(n)$ and boundedness from above
property of the function $\psi $, we obtain that $\psi (f_{j_{1}}\cdots
f_{j_{s}})\leq nG_{\psi ,A}(n)$. Therefore $M_{\psi }(T^{\ast },\bar{\delta})\leq
nG_{\psi ,A}(n)$. Hence $M_{\psi }(T^{\ast })\leq \max (n,nG_{\psi ,A}(n))$.
Using Lemma \ref{7L2}, we obtain $$\psi ^{d}(T)\leq \psi ^{d}(T^{\ast })\leq
\max (n,nG_{\psi ,A}(n))Z_{\psi ,A}(n)\log _{2}(4nL_{\psi ,A}(n)).$$ Thus, the statement of the lemma holds.
\end{proof}

\begin{proof}[Proof of Theorem \protect\ref{7T6}]
Using Theorem \ref{7T3} and Lemmas \ref{7M8} and \ref{7M11}, we obtain that the
functions $Z_{\psi ,A}$, $G_{\psi ,A}$ and $L_{\psi ,A}$ are everywhere
defined. Let there exist polynomials $p_{1}$, $p_{2}$, and $p_{3}$ such that
$Z_{\psi ,A}(n)\leq p_{1}(n)$, $G_{\psi ,A}(n)\leq p_{2}(n)$ and $L_{\psi
,A}(n)\leq 2^{p_{3}(n)}$ for any $n\in \omega $. Using Lemma \ref{7M13}, we
obtain that there exists a polynomial $p_{0}$ such that $\mathcal{H}^{\infty}_{\psi ,A}(n)\leq
p_{0}(n)$ for any $n\in \omega $. Let there be no a polynomial $p_{1}$ such
that $Z_{\psi ,A}(n)\leq p_{1}(n)$ for any $n\in \omega $ or there be no a
polynomial $p_{2}$ such that $G_{\psi ,A}(n)\leq p_{2}(n)$ for any $n\in
\omega $, or there be no a polynomial $p_{3}$ such that $L_{\psi ,A}(n)\leq
2^{p_{3}(n)}$ for any $n\in \omega $. Using Lemmas \ref{7M2}, \ref{7M9}, and %
\ref{7M12}, we obtain that there is no a polynomial $p_{0}$ such that $%
\mathcal{H}^{\infty}_{\psi ,A}(n)\leq p_{0}(n)$ for any $n\in \omega $.
\end{proof}

\section{Conclusions} \label{7S6}

In this paper, we looked at closed classes of decision tables with many-valued decisions. For decision tables from an arbitrary closed class, a function has been studied that characterizes the growth in the worst case of the minimum complexity of deterministic decision trees with an increase in the minimum complexity of nondeterministic decision trees. We found a criterion that this function is defined everywhere and studied the behavior of the function in the case when it satisfies this criterion. The results shed light on the relationship between deterministic decision trees and systems of decision rules represented by nondeterministic decision trees.

\subsection*{Acknowledgements}

Research reported in this publication was
supported by King Abdullah University of Science and Technology (KAUST).

\bibliographystyle{spmpsci}
\bibliography{abc_bibliography}

\end{document}